\newcommand{\rmnum}[1]{\romannumeral #1}
\newcommand{\Rmnum}[1]{\expandafter\@slowromancap\romannumeral #1@}
\begin{document}

\title{Faster Balanced Clusterings in High Dimension\thanks{Part of this work has appeared in~\cite{DBLP:conf/cccg/Ding17} that only considered the balanced $k$-center clustering. In this paper, we significantly simplify the proofs and generalize the idea to an effective framework for three different balanced clustering problems in high dimension. 
}
}
%
%

\author{Hu Ding}
\institute{
 Department of Computer Science and Engineering\\
Michigan State University\\
School of Computer Science and Technology\\
University of Science and Technology of China\\
 \email{\tt huding@msu.edu, huding@ustc.edu.cn}
 }

%
%


\maketitle
\thispagestyle{plain}
\begin{abstract}
The problem of constrained clustering has attracted significant attention in the past decades. In this paper, we study the balanced $k$-center, $k$-median, and $k$-means clustering problems where the size of each cluster is constrained by the given lower and upper bounds. The problems are motivated by the applications in processing large-scale data in high dimension. Existing methods often need to compute complicated matchings (or min cost flows) to satisfy the balance constraint, and thus suffer from high complexities especially in high dimension. 
We develop an effective framework for the three balanced clustering problems to address this issue, and our method is based on a novel spatial partition idea in geometry.
For the balanced $k$-center clustering, we provide a $4$-approximation algorithm that improves the existing approximation factors; for the balanced $k$-median and $k$-means clusterings, our algorithms yield constant and $(1+\epsilon)$-approximation factors with any $\epsilon>0$. More importantly, our algorithms achieve linear or nearly linear running times when $k$ is a constant, and significantly improve the existing ones.
Our results can be easily extended to metric balanced clusterings and the running times are sub-linear in terms of the complexity of $n$-point metric.\\

\textbf{Keywords:} balanced clustering;  $k$-center; $k$-median; $k$-means; high dimension

\end{abstract}

 \section{Introduction}
\label{sec-set}
Geometric clustering is a fundamental topic in computer science and has numerous applications in the areas of data mining, data management, and machine learning~\cite{jain2010data}. Given a set of points in Euclidean space and a positive integer $k$, the problem of clustering is to partition the set into $k$ clusters to minimize some given objective function. $k$-center, $k$-median, and $k$-means clusterings are among the most widely studied clustering problems. Roughly speaking, the first one aims to cover the clusters by $k$ balls such that  the maximum radius of the balls is minimized, while the other two minimize the average (squared) distance from each input point to its nearest cluster center. Note that the cluster centers can locate arbitrarily in the Euclidean space and do not necessarily come from the input set. 
Different to geometric clustering, metric clustering considers the case that all the given points (vertices) form a metric graph and the cluster centers are chosen from the vertices; the problem is motivated by the applications in the area of facility location, e.g., how to place facilities to minimize transportation costs~\cite{manne1964plant,kuehn1963heuristic}.

In this paper, we consider the balanced $k$-center, $k$-median, and $k$-means clusterings where the size of each cluster is constrained within some given interval. 
Besides the well studied applications in data analysis and facility location, the balanced clustering problem is particularly motivated by the arising challenges in {\em big data}~\cite{DLP,ABM,BBL}. For example, if the data scale is extremely large, we need to dispatch data to multiple machines to process; at the same time we have to consider the balancedness, because the machines receiving too much data could be the bottleneck of the system and the ones receiving too little data is not sufficiently energy-efficient. Another motivation of the balanced clustering is from machine learning.  It was observed that many classification methods have the property ``locally simple but globally complex''; that is, a classification rule may be complicated on the global data, but often can be very simple in each local region~\cite{vapnik1993local}. For example, Dick et al.~\cite{DLP} proposed a new paradigm for performing data-dependent dispatching that takes advantage of such structure by sending similar data points to the same machines.

Below, we provide the formal definitions of the problems studied in this paper.

\begin{definition}
\label{def-problem}
Let $k$ be a positive integer and $P$ be a set of $n$ points in $\mathbb{R}^d$. Given $L\leq U \in \mathbb{Z}^+$, the balanced clustering is to partition $P$ into $k$ clusters $P_1, \cdots, P_k$, where each cluster $P_j$ has a cluster center $c_j\in\mathbb{R}^d$ and the size within $[L, U]$, such that the following objective function is minimized:
\begin{itemize}
\item $k$-center: $\max_{1\leq j\leq k}\max_{p\in P_j}||p-c_j||$;
\item $k$-median: $\sum^k_{j=1}\sum_{p\in P_j}||p-c_j||$;
\item $k$-means: $\sum^k_{j=1}\sum_{p\in P_j}||p-c_j||^2$.
\end{itemize}
For simplicity, we denote the problems as $k$-BCenter, $k$-BMedian, and $k$-BMeans, respectively. 
\end{definition}

Similarly, we have the definitions for the balanced clusterings in an abstract metric space.

\begin{definition}
\label{def-problem2}
Let $k$ be a positive integer and $P$ be a set of $n$ vertices in some abstract metric space. Given $L\leq U \in \mathbb{Z}^+$, the balanced clustering is to partition $P$ into $k$ clusters $P_1, \cdots, P_k$, where each cluster $P_j$ has a cluster center $c_j\in P_j$ and the size within $[L, U]$, such that the following objective function is minimized:
\begin{itemize}
\item $k$-center: $\max_{1\leq j\leq k}\max_{p\in P_j}||p-c_j||$;
\item $k$-median: $\sum^k_{j=1}\sum_{p\in P_j}||p-c_j||$;
\item $k$-means: $\sum^k_{j=1}\sum_{p\in P_j}||p-c_j||^2$.
\end{itemize}
For simplicity, we denote the problems as Metric $k$-BCenter, Metric $k$-BMedian, and Metric $k$-BMeans, respectively. 
\end{definition}

\begin{remark}
To ensure that a feasible solution exists, we assume $1\leq L\leq \lfloor\frac{n}{k}\rfloor \leq \lceil\frac{n}{k}\rceil\leq U\leq n$ in Definition~\ref{def-problem} and \ref{def-problem2}.
\end{remark}

\begin{definition}
\label{def-app}
$\forall c\geq 1$, any feasible clustering solution (of the problems in Definition~\ref{def-problem} and \ref{def-problem2}) achieving at most $c$ times the minimum value of the objective value is called a $c$-approximation.
\end{definition}



\subsection{The Related Work and Our Main Result}
\label{sec-our}

The optimal approximation results of the ordinary $k$-center clustering appeared in the 80's: Gonazlez~\cite{G85} and Hochbaum and Shmoys~\cite{HS85} respectively provided a 2-approximation and proved that any approximation ratio $c<2$ would imply $P=NP$. If $k$ is fixed, we have the {\em PTAS (Polynomial Time Approximation Scheme)} for $k$-center clustering through constructing the core-set of minimum enclosing ball~\cite{BHI,badoiu2003smaller}.

For the ordinary $k$-median and $k$-means clusterings, Arya et al.~\cite{arya2004local} and Kanungo et al.~\cite{kanungo2004local} separately gave a $(3+\epsilon)$-approximation and a $(9+\epsilon)$-approximation algorithms, where $\epsilon$ can be any small positive number. Recently, Li and Svensson~\cite{li2016approximating} proposed the concept ``pseudo-approximation'' that allows to output $k+O(1)$ clusters for $k$-median clustering; their method achieves an approximation ratio of $1+\sqrt{3}+\epsilon$, and then Byrka et al.~\cite{byrka2017improved} improved the ratio to be $2.675+\epsilon$. Both $k$-median clustering~\cite{megiddo1984complexity} and $k$-means clustering~\cite{D08,mahajan2012planar,vattanihardness} have been shown to be NP-hard. Also, it is NP-hard to achieve their approximations within some factor larger than $1$~\cite{DBLP:conf/compgeom/AwasthiCKS15,guruswami2003embeddings}. 
When the dimension $d$ is fixed, Arora et al.~\cite{arora1998approximation} gave a PTAS for $k$-median clustering, where the running time was further improved to be nearly linear by Kolliopoulos and Rao~\cite{kolliopoulos2007nearly}; recently, Cohen-Addad et al.~\cite{cohen2016local} and Friggstad et al.~\cite{friggstad2016local} provided the PTAS for low dimensional $k$-means clustering by using the local search technique. When $k$ is fixed, Kumar et al.~\cite{kumar2010linear} and Jaiswal et al.~\cite{jaiswal2014simple,jaiswal2015improved} gave the PTAS for $k$-median and $k$-means clusterings in high dimension. Moreover, Chen~\cite{DBLP:journals/siamcomp/Chen09} and further Feldman and Langberg~\cite{feldman2011unified} showed that the core-set technique is able to reduce the running times of the algorithms. We refer the reader to the recent survey on clustering problems for more details~\cite{jaiswal2018approximate}.

Several variants of $k$-center clustering given the upper~\cite{BKP,KS00,CHK,ABC,KC14} or lower~\cite{APF,EHR,AS16,DBLP:conf/waoa/AhmadianS12} bounds on cluster sizes have been extensively studied in recent years. 
In particular, Ding et al.~\cite{DHH} studied the $k$-center clustering with two-sided bounds. Further, R{\"{o}}sner and Schmidt~\cite{DBLP:conf/icalp/Rosner018} investigate different types of balanced $k$-center clustering in terms of privacy preserving. 
Most of existing methods model the problems as linear integer programmings and design novel rounding algorithms to obtain some constant factor approximations. For example, the algorithm of~\cite{DHH} yields a $6$-approximation for Metric $k$-BCenter (if using triangle inequality, their result directly implies a $12$-approximation for $k$-BCenter in Euclidean space). $k$-BMedian and $k$-BMeans are much more challenging. For the case with only the upper bound, Li~\cite{li2016approximating,li2017uniform} showed that an $O(\frac{1}{\epsilon^2}\log\frac{1}{\epsilon})$-approximation can be obtained if it allows to output $(1+\epsilon)k$ clusters. Dick et al.~\cite{DLP} applied the techniques of linear programming and min cost flow to produce several constant factor approximations for $k$-BCenter, $k$-BMedian, and $k$-BMeans, but the upper bounds are violated by some constant factors. Borgwardt et al.~\cite{borgwardt2017lp} considered the convergence of using the heuristic Lloyd's $k$-means method for $k$-BMeans. Several other balanced clustering problems have been studied as well, such as~\cite{malinen2014balanced,banerjee2006scalable}.

In fact, balanced clustering falls under the umbrella of general constrained clustering problem. Ding and Xu~\cite{DBLP:conf/soda/DingX15} provided a unified framework for solving a class of constrained $k$-median and $k$-means clusterings. Roughly speaking, given an $\epsilon>0$, the method generates a set of $O\big(2^{\text{poly}(k/\epsilon)}(\log n)^k\big)$ $k$-tuples where at least one of them yields $(1+\epsilon)$-approximation. However, to select the qualified candidate, we need to design different selection algorithms depending on different constraints (e.g., our balanced requirement). Further, Bhattacharya et al.~\cite{bhattacharya2018faster} improved the result of~\cite{DBLP:conf/soda/DingX15} with respect to both the candidate set size and running time.

\textbf{Our main result.} 
Motivated by the applications from big data, we assume that both the number of input points $n$ and dimension $d$ are large, and propose faster algorithms for $k$-BCenter, $k$-BMedian, $k$-BMeans, and their metric counterparts. In addition, we assume that the number of clusters $k$ is a constant. Actually, $k$ is usually a small number in practice (e.g., the data is distributed over less than $10$ machines). Moreover, the existing research on large-scale clustering problems often assume that either $k$ or $d$ is a constant, due to their hardness results mentioned above~\cite{G85,HS85,megiddo1984complexity,D08,mahajan2012planar,vattanihardness,DBLP:conf/compgeom/AwasthiCKS15,guruswami2003embeddings}.


Clustering problems commonly involve two key steps: \textbf{(\rmnum{1}).} determine the $k$ cluster centers and \textbf{(\rmnum{2}).} partition the input data into $k$ clusters. For ordinary clustering problems, step (\rmnum{2}) is trivial. Namely, we just need to assign each data point to its nearest cluster center to minimize the total clustering cost. However, step (\rmnum{2}) can be complicated for constrained clustering problems. In general, we need to compute the matching between input points $P$ and the obtained cluster centers, such that the objective value is minimized and the constraint (e.g., the balancedness) can be satisfied simultaneously. For example, the algorithms in~\cite{DLP,DBLP:conf/soda/DingX15,bhattacharya2018faster} all need to reduce it to be a max flow or min cost flow problem. They build a bipartite graph between $P$ and the $k$ cluster centers, and the numbers of the vertices $V$ and edges $E$ are both linear on $n$; therefore, their running times will be $\Omega(|V|\cdot|E|)$ that is at least quadratic on the input size. The authors of~\cite{DBLP:conf/soda/DingX15} proposed the open problem: can we avoid computing the high complexity matching in step (\rmnum{2})?

In this paper, we answer their question in the affirmative. Specifically, we provide a novel method to complete step (\rmnum{2}) in linear time. For $k$-BCenter, we apply a spatial partition idea to build a system of linear equations and inequalities that has the size independent of $n$, and round the feasible solution to be an integral solution (i.e., a feasible partition on $P$) efficiently without increasing the objective value. Comparing with the existing method for $k$-BCenter~\cite{DHH}, we improve the approximation ratio from $12$ to $4$ and significantly reduce the running time by avoiding to solve the large-scale matching problem.  To solve $k$-BMedian and $k$-BMeans, we generalize the spatial partition idea and replace the system of linear equations and inequalities by a model of linear programming. More importantly, a feasible partition on $P$ can be efficiently obtained as well. We adopt the results from~\cite{DBLP:conf/soda/DingX15,bhattacharya2018faster} and obtain the same constant factor and $(1+\epsilon)$-approximations, while our running times are much lower.


Furthermore, our method can be easily extended to Metric $k$-BCenter, $k$-BMedian, and $k$-BMeans with similar running times.


The rest of the paper is organized as follows. Assuming the $k$ cluster centers are given, we propose our framework to efficiently compute the balanced partition in Section~\ref{sec-partition}. Using this framework, we present the approximation algorithms for $k$-BCenter, $k$-BMedian, $k$-BMeans, and their metric counterparts in Section~\ref{sec-high}.

%
\section{Computing The Best Balanced Partition}
\label{sec-partition}

Suppose the $k$ cluster centers are fixed, we consider the problem that how to compute a balanced partition on $P$ to minimize the clustering costs of $k$-BCenter, $k$-BMedian, and $k$-BMeans. We introduce our method for $k$-BCenter first, and then extend it to the more complicated $k$-BMedian and $k$-BMeans.

\subsection{Balanced Partition for $k$-BCenter}
\label{sec-feasible}
%
%

Denote by $\{c_1, \cdots, c_k\}$ the fixed $k$ cluster centers. Since $k$-BCenter is to find $k$ balls, we need to determine the radius first. 
Given a number $r>0$, we draw $k$ balls with the radius $r$ and centered at $\{c_1, \cdots, c_k\}$ respectively. We denote the $k$ balls as $\mathcal{B}_1, \cdots, \mathcal{B}_k$.  If we can find a balanced partition of $P$, say $P_1, \cdots, P_k$, that each $P_j$ is covered by an individual ball, we say that $r$ is a {\em feasible radius}. It is easy to know that the radius must come from the $kn$ distances $\{||p-c_j||\mid p\in P, 1\leq j\leq k\}$. If we have an \textbf{oracle} to check the feasibility of each candidate $r$, we can apply binary search to find the smallest feasible radius.

A straightforward way to check the feasibility is building a bipartite graph between the $n$ points of $P$ and the $k$ balls, where a point is connected to a ball if it is covered by the ball; each ball has a capacity $U$ and demand $L$, and the maximum flow from the points to balls is $n$ if and only if the radius is feasible. The existing maximum flow algorithms, such as Ford-Fulkerson algorithm and the recent Orlin's algorithm~\cite{O13}, all take  $\Omega(n^2)$ time. If $k$ is a constant, we show that the problem can be reduced to \textbf{a system of linear equations and inequalities (SoL)} with the size independent of $n$. 

The region $\cup^k_{j=1}\mathcal{B}_j$ divides the space into $2^k-1$ parts (we ignore the region outside the union of the balls, since no point locates there; otherwise, we can simply reject this candidate $r$).
Suppose we have $t$ indices $1\leq j_1<j_2<\cdots<j_t\leq k$ with $1\leq t\leq k$. Denote by $\mathcal{R}_{(j_1, j_2, \cdots, j_t)}$ the region 
\begin{eqnarray}
(\mathcal{B}_{j_1}\cap \cdots\cap \mathcal{B}_{j_t})\setminus(\cup_{j\notin\{j_1, \cdots, j_t\}}\mathcal{B}_j).
\end{eqnarray}
We calculate the total number of points covered by $\mathcal{R}_{(j_1, j_2, \cdots, j_t)}$ and denote it as $n_{(j_1, j_2, \cdots, j_t)}$. Moreover, we assign $t$ non-negative variables 
\begin{eqnarray}
x^{j_1}_{(j_1, j_2, \cdots, j_t)},\hspace{0.1in} x^{j_2}_{(j_1, j_2, \cdots, j_t)},\hspace{0.1in}\cdots,\hspace{0.1in} x^{j_t}_{(j_1, j_2, \cdots, j_t)}
\end{eqnarray}
with each $x^{j_l}_{(j_1, j_2, \cdots, j_t)}$ indicating the number of points assigned to the $j_l$-th cluster from $\mathcal{R}_{(j_1, j_2, \cdots, j_t)}$. Thus, we have the following two types of linear constraints: for each region $\mathcal{R}_{(j_1, j_2, \cdots, j_t)}$,
\begin{eqnarray}
x^{j_1}_{(j_1, j_2, \cdots, j_t)}+ \cdots +x^{j_t}_{(j_1, j_2, \cdots, j_t)}=n_{(j_1, j_2, \cdots, j_t)}, \label{for-llp1}
\end{eqnarray}
and for each $j_l\in\{1, 2, \cdots, k\}$,
\begin{eqnarray}
L\leq \sum_{(j_1, j_2, \cdots, j_t)\in \pi_{j_l}} x^{j_l}_{(j_1, j_2, \cdots, j_t)}\leq U. \label{for-llp2}
\end{eqnarray}
Here $\pi_{j_l}$ is the set of all the possible subsets containing $j_l$ of $\{1, \cdots, k\}$.

%
%
%

\begin{figure}[h]
\centering
 \includegraphics[height=2in]{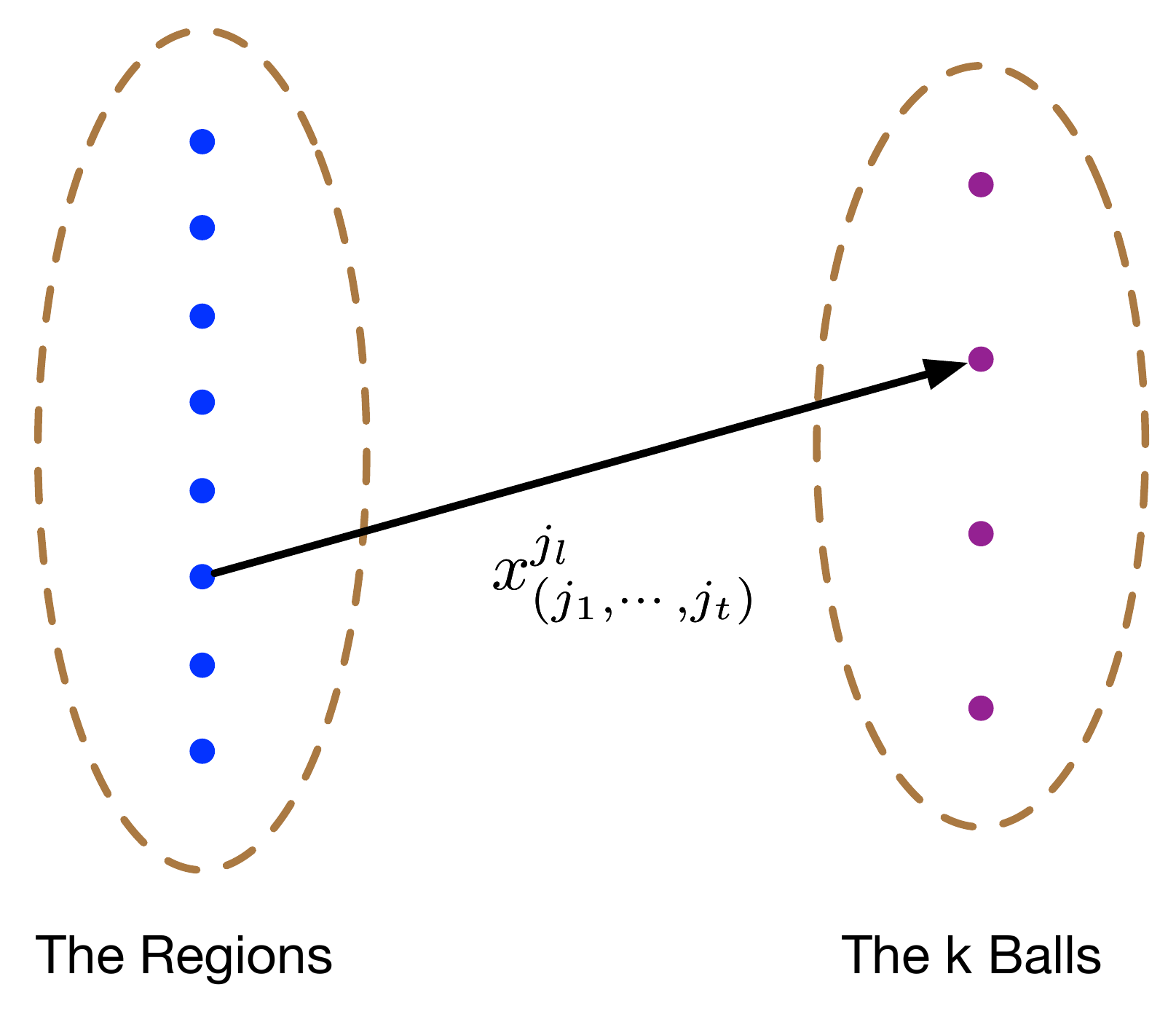}
  \caption{Each region has the supply amount $n_{(j_1, j_2, \cdots, j_t)}$, and each ball has the capacity $U$ and demand $L$.} 
   \label{fig-kbc_match}
\end{figure}
First, it is straightforward to have the following lemma.

\begin{lemma}
\label{lem-soltopartition}
An integral solution of (\ref{for-llp1})-(\ref{for-llp2}) corresponds to a balanced partition of $P$ covered by the $k$ balls $\mathcal{B}_1, \cdots, \mathcal{B}_k$, and vice versa.
\end{lemma}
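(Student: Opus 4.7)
The plan is to establish a bijection between balanced partitions of $P$ covered by the balls and non-negative integral solutions of the system (\ref{for-llp1})--(\ref{for-llp2}). The key structural fact driving both directions is that the $2^k-1$ regions $\{\mathcal{R}_{(j_1,\ldots,j_t)}\}$ tile $\cup_{j=1}^k \mathcal{B}_j$, and by the paper's convention (we only consider candidate $r$ for which $P\subseteq\cup_j \mathcal{B}_j$) every point of $P$ lies in exactly one such region; moreover, a point $p\in \mathcal{R}_{(j_1,\ldots,j_t)}$ is covered by ball $\mathcal{B}_{j_l}$ if and only if $j_l\in\{j_1,\ldots,j_t\}$.

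For the forward direction, given a balanced partition $P_1,\ldots,P_k$ with $P_j\subseteq \mathcal{B}_j$, I would set
\[
 x^{j_l}_{(j_1,\ldots,j_t)} \;:=\; \bigl|P_{j_l}\cap \mathcal{R}_{(j_1,\ldots,j_t)}\bigr|
\]
for every region and every $l\in\{1,\ldots,t\}$. Constraint (\ref{for-llp1}) then follows because the points of $\mathcal{R}_{(j_1,\ldots,j_t)}$ can only be assigned to the clusters $j_1,\ldots,j_t$ (no other ball covers them), so summing over $l$ counts each such point exactly once and yields $n_{(j_1,\ldots,j_t)}$. Constraint (\ref{for-llp2}) follows by decomposing $|P_{j_l}|$ over the regions whose index set contains $j_l$, i.e., over $\pi_{j_l}$, and invoking the balancedness $L\leq|P_{j_l}|\leq U$.

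For the backward direction, given any non-negative integral feasible solution to (\ref{for-llp1})--(\ref{for-llp2}), I would build the partition one region at a time: inside each $\mathcal{R}_{(j_1,\ldots,j_t)}$, arbitrarily split its $n_{(j_1,\ldots,j_t)}$ points into disjoint blocks of sizes $x^{j_l}_{(j_1,\ldots,j_t)}$ for $l=1,\ldots,t$ and assign block $l$ to cluster $P_{j_l}$. Constraint (\ref{for-llp1}) guarantees such a decomposition exists; taking the union across all regions produces a partition of $P$ in which each $P_{j_l}\subseteq \mathcal{B}_{j_l}$ by construction, and the cardinality $|P_{j_l}|$ equals the sum in (\ref{for-llp2}) and therefore lies in $[L,U]$.

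The argument is essentially bookkeeping and I do not expect a substantive obstacle. The only mild subtlety is notational: one must keep the multi-index $(j_1,\ldots,j_t)$ used to label regions consistent with the collection $\pi_{j_l}$ of subsets of $\{1,\ldots,k\}$ containing $j_l$, both of which simply encode the elementary set-theoretic fact that a point's admissible cluster assignments coincide with the balls that cover it.
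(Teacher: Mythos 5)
Your argument is correct and is exactly the bookkeeping the paper has in mind: the paper states this lemma without proof as ``straightforward,'' and its proof of the analogous Lemma~\ref{lem-lp2} uses precisely your two directions (setting each variable to the number of region points assigned to a cluster, and conversely arbitrarily picking $x^{j_l}_{(j_1,\ldots,j_t)}$ points from each region). No gaps.
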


In fact, solving the SoL (\ref{for-llp1})-(\ref{for-llp2}) is exactly a problem of max flow (but the complexity is independent of $n$). We build a bipartite graph $G=(V, E)$ with $V=V_{\mathcal{R}}\cup V_{\mathcal{B}}$: $V_{\mathcal{R}}$ includes a set of vertices with each corresponding to an individual region $\mathcal{R}_{(j_1, j_2, \cdots, j_t)}$, and $V_{\mathcal{B}}$ includes a set of $k$ vertices corresponding to the $k$ balls (with a slight abuse of notation, we still use $\mathcal{R}_{(j_1, j_2, \cdots, j_t)}$s and $\mathcal{B}_j$s to denote the vertices); for each variable $x^{j_l}_{(j_1, j_2, \cdots, j_t)}$, we connect an edge from $\mathcal{R}_{(j_1, j_2, \cdots, j_t)}$ to $\mathcal{B}_{j_l}$ with the value of $x^{j_l}_{(j_1, j_2, \cdots, j_t)}$ being the flow on the edge; each $\mathcal{R}_{(j_1, j_2, \cdots, j_t)}$ has the supply amount $n_{(j_1, j_2, \cdots, j_t)}$, and each $\mathcal{B}_j$ has the capacity $U$ and demand $L$. See Figure.~\ref{fig-kbc_match}. The SoL (\ref{for-llp1})-(\ref{for-llp2}) is feasible iff the max flow has the amount $n$. With both capacities and demands, the problem can be transformed to a typical max flow problem with only capacities, which can be solved in $O(|V|\cdot |E|)$ time~\cite{erickson}. Note that the SoL (\ref{for-llp1})-(\ref{for-llp2}) has at most $k2^{k}$ variables and $|V|<2^k+k$. Thus, the time complexity 
\begin{eqnarray}
O(|V|\cdot |E|)=O(k 2^{2k}). \label{for-matchtime}
\end{eqnarray}

Once obtaining a max flow of the bipartite graph, we further check that whether it is an integral solution. If it has some fractional flow values, we need to transform it to an integral solution without decreasing the total flow. Let us remove all the edges having integral values and only focus on the remaining ones in the bipartite graph. We conduct the following strategy to eliminate the edges until empty, that is, an integral solution is obtained. Starting from an arbitrarily picked edge, we grow it to be a path from the edge's two sides, until one of the two cases happens: \textbf{(1)} the path contains a circle or \textbf{(2)} no more edge can be added (i.e., the two endpoints are both degree-$1$ vertices). See Figure.~\ref{fig-case1} and \ref{fig-case2}. Since there are only $k$ vertices (i.e., the $k$ balls) in the right column of the bipartite graph (Figure.~\ref{fig-kbc_match}), the resulting path has at most $O(k)$ vertices.

\begin{figure}[h]
\vspace{-.2cm} 
\centering
  \subfloat[]{\label{fig-case1}\includegraphics[height=1.7in]{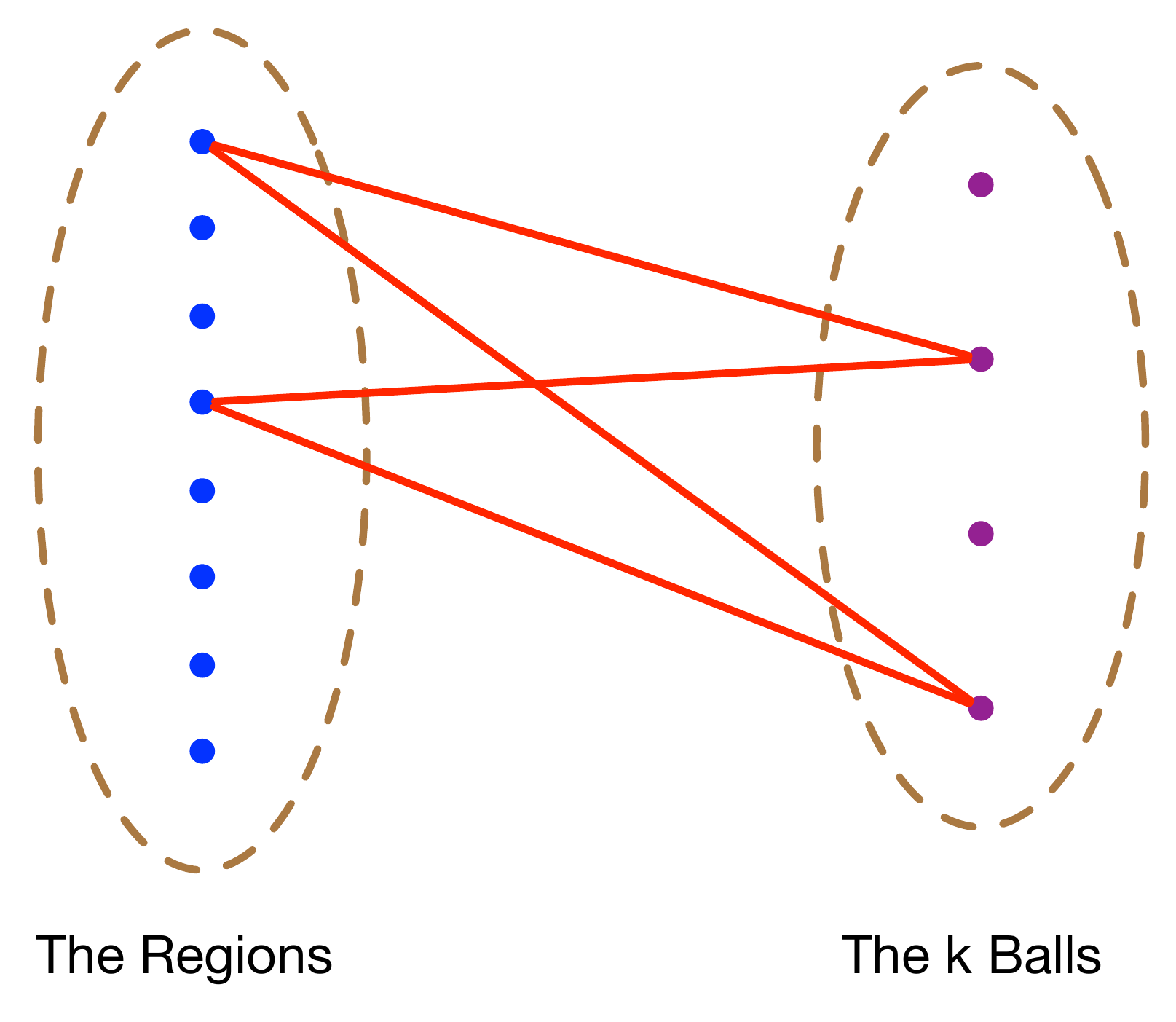}}
  \hspace{0.5in}
  \subfloat[]{\label{fig-case2}\includegraphics[height=1.7in]{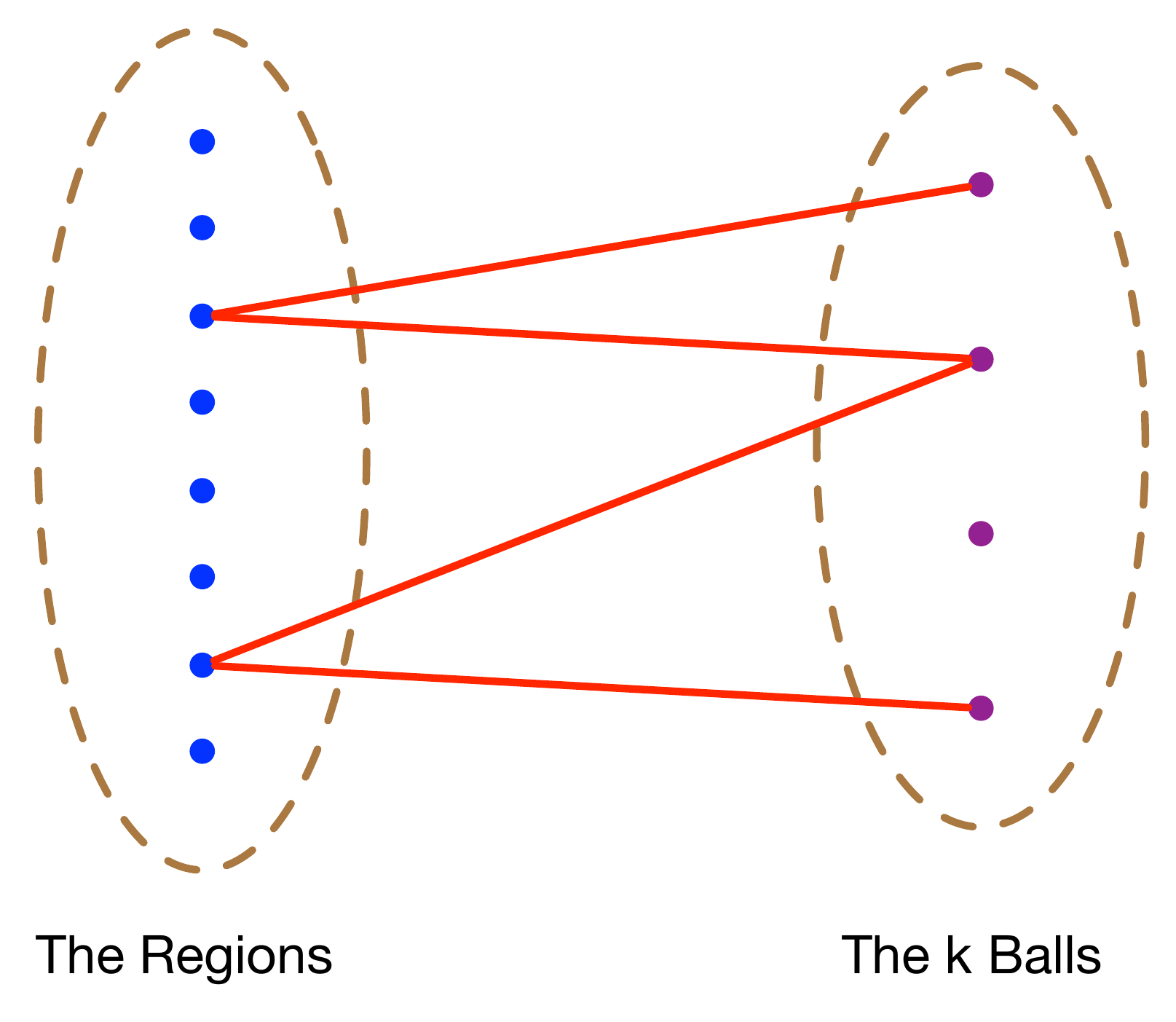}}
  \caption{The two cases when growing the path.} 
  \vspace{-0.25in}
\end{figure}

Without loss of generality, we denote the circle in case (1) as 
\begin{eqnarray}
\big\{v_{\mathcal{R}_1}, \hspace{0.06in}v_{\mathcal{B}_1}, \hspace{0.06in} v_{\mathcal{R}_2}, \hspace{0.06in}\cdots, \hspace{0.06in} v_{\mathcal{R}_h}, \hspace{0.06in} v_{\mathcal{B}_h}, \hspace{0.06in} v_{\mathcal{R}_1}\big\}, \label{for-newcircle}
\end{eqnarray}
where each $v_{\mathcal{R}_j}$ and $v_{\mathcal{B}_j}$ denote the vertices in the left and right column, respectively. We also assume the corresponding fractional flow values (i.e., the variables) are
\begin{eqnarray}
\big\{x^1_{*_1}, \hspace{0.06in} x^1_{*_2}, \hspace{0.06in} x^2_{*_2}, \hspace{0.06in} x^2_{*_3}, \hspace{0.06in} \cdots, \hspace{0.06in} x^h_{*_h}, \hspace{0.06in} x^h_{*_1}\big\}.\label{for-variable-case1}
\end{eqnarray}
Each $x^i_{*_j}$ indicates the flow value from $\mathcal{R}_j$ to $\mathcal{B}_i$. Here we replace the foot subscript by $*_j$ to simplify our analysis. 
Meanwhile, we choose the positive value 
\begin{eqnarray}
\delta=\min\{x^j_{*_j}-\lfloor x^j_{*_j}\rfloor, \lceil x^{j-1}_{*_{j}}\rceil -x^{j-1}_{*_{j}}\mid 1\leq j\leq h\} \label{for-round}
\end{eqnarray}
where $x^0_{*_1}$ represents $x^h_{*_1}$ for convenience. Then, the following numbers
\begin{eqnarray}
x^1_{*_1}-\delta,x^1_{*_2}+\delta, x^2_{*_2}-\delta,x^2_{*_3}+\delta, \cdots, x^h_{*_h}-\delta,x^h_{*_1}+\delta
\end{eqnarray}
contain at least one integer and all the others remain non-negative. More importantly, the total flow passing through each vertex remains the same; that is, the modified solution is still a max flow in the bipartite graph.  

For case (2), we first claim that both the two endpoints must be in the right column of $G$ (see Figure.~\ref{fig-case2}). Since the current solution is already a max flow with the amount $n$, the total flow passing through each vertex $v_{\mathcal{R}_j}$ in the left column should exactly match the supply amount $n_{*_j}$ which is an integer. In addition, we know each endpoint has only one fractional-flow-value edge. Therefore, if an endpoint is in the left column, it will be contradict to the fact that its total flow $n_{*_j}$ is an integer. Then, we suppose that the case (2) path is 
\begin{eqnarray}
\big\{v_{\mathcal{B}_1}, \hspace{0.06in}v_{\mathcal{R}_1}, \hspace{0.06in} v_{\mathcal{B}_2}, \hspace{0.06in}\cdots, \hspace{0.06in} v_{\mathcal{R}_{h-1}}, \hspace{0.06in} v_{\mathcal{B}_h},\big\},
\end{eqnarray}
and the corresponding fractional flow values (i.e., the variables) are
\begin{eqnarray}
\big\{x^1_{*_1}, \hspace{0.06in} x^2_{*_1}, \hspace{0.06in} x^2_{*_2}, \hspace{0.06in} x^3_{*_2}, \hspace{0.06in} \cdots, \hspace{0.06in} x^h_{*_{h-1}}\big\}.
\end{eqnarray}
Let the total flows passing through $v_{\mathcal{B}_1}$ and $v_{\mathcal{B}_h}$ be $m_1$ and $m_h$ respectively. Since $v_{\mathcal{B}_1}$ and $v_{\mathcal{B}_h}$ both have only one fractional-flow-value edge, we know that $m_1$ and $m_h$ are fractional values. We choose the positive value 
\begin{eqnarray}
\delta=\min\Big\{\{m_1-L, U-m_h\}\cup\{x^j_{*_j}-\lfloor x^j_{*_j}\rfloor, \lceil x^{j+1}_{*_{j}}\rceil -x^{j+1}_{*_{j}}\mid 1\leq j\leq h-1\} \Big\} .
\end{eqnarray}
Similar to case (1),  the following numbers
\begin{eqnarray}
x^1_{*_1}-\delta,x^2_{*_1}+\delta, x^2_{*_2}-\delta,x^3_{*_2}+\delta, \cdots, x^{h-1}_{*_{h-1}}-\delta,x^h_{*_{h-1}}+\delta
\end{eqnarray}
contain at least one integer and no constraint of the SoL (\ref{for-llp1})-(\ref{for-llp2}) is violated after this adjustment.

Overall, the adjustment for case (1) or (2) adds at least one new integral flow, and we can remove its corresponding edge from $G$. After at most $|E|=k2^{k}$ times, $G$ will have no edge. In other words, an integral solution of the SoL (\ref{for-llp1})-(\ref{for-llp2}) is obtained. Furthermore, each adjustment costs $O(k)$ time and the  complexity of all the adjustments is 
\begin{eqnarray}
|E|\cdot O(k)=O(k^2 2^k).\label{for-roundtime}
\end{eqnarray}
Combining our above analysis and the time complexities (\ref{for-matchtime}) and (\ref{for-roundtime}), we have the following theorem.


\begin{theorem}
\label{lem-integer}
We can compute an integral solution of (\ref{for-llp1})-(\ref{for-llp2}) in $O(k^2 2^{2k})$ time. 
\end{theorem}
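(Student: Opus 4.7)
The plan is to follow exactly the two-phase strategy already sketched in the preceding paragraphs: (\textbf{i}) obtain a possibly fractional max flow of the small bipartite graph $G=(V,E)$ built from the regions $\mathcal{R}_{(j_1,\ldots,j_t)}$ and the balls $\mathcal{B}_1,\ldots,\mathcal{B}_k$, and (\textbf{ii}) convert it to an integral feasible solution of the SoL (\ref{for-llp1})-(\ref{for-llp2}) without reducing its value. For (i), I would note that the number of regions is at most $2^k-1$, so $|V|<2^k+k$ and $|E|\le k2^k$; transforming the capacities-plus-demands problem into a pure capacitated max flow in the standard way (adding a source and sink with edge capacities encoding the $L$ and $U$ bounds) keeps the graph size within the same asymptotic bounds, and running Orlin's algorithm gives an $O(|V|\cdot|E|)=O(k2^{2k})$ bound as in (\ref{for-matchtime}). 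A candidate radius $r$ is feasible iff this max flow equals $n$, at which point Lemma~\ref{lem-soltopartition} would convert an integral solution into the desired balanced partition.

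For (ii), I would define the \emph{fractional subgraph} $G_f\subseteq G$ consisting of the edges whose current flow value is non-integral, and repeatedly reduce $|E(G_f)|$ by at least one. Starting from an arbitrary fractional edge, I would grow a path in $G_f$ in both directions until either a cycle is formed (case 1) or both endpoints become degree-$1$ in $G_f$ (case 2); since the right side of $G$ has only $k$ vertices, the traversal terminates in $O(k)$ time. In case 1, I apply the $\pm\delta$ alternating adjustment along the cycle with $\delta$ chosen by (\ref{for-round}) so that at least one edge becomes integral, all flows remain non-negative, and the total flow passing through every vertex is preserved, hence all constraints still hold.

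The main obstacle is the bookkeeping for case 2, where we must verify that the $L$ and $U$ bounds on the two ball-endpoints are not violated after the adjustment. The key observation I would establish first is that \emph{both endpoints must lie in the ball-side column} $V_\mathcal{B}$: an endpoint $v_{\mathcal{R}_j}$ would be incident to exactly one fractional edge in $G_f$, which combined with integral flows on its other edges would make its total outgoing flow non-integral, contradicting the equality $\sum_l x^l_{*_j}=n_{*_j}\in\mathbb Z$ from (\ref{for-llp1}). Knowing both endpoints are balls $v_{\mathcal{B}_1}$ and $v_{\mathcal{B}_h}$ with currently fractional total flows $m_1,m_h\in(L,U)$, I take $\delta$ to be the minimum of $m_1-L$, $U-m_h$, and the usual integrality gaps along the path; alternating $\pm\delta$ starting with $-\delta$ at $v_{\mathcal{B}_1}$ then decreases $m_1$ by $\delta$ and increases $m_h$ by $\delta$, keeping both within $[L,U]$, while every interior region vertex's total flow is unchanged. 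Thus an integral edge is again produced per iteration.

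Since $|E(G_f)|$ drops by at least one per iteration and $|E|\le k2^k$, the rounding phase uses at most $k2^k$ iterations of $O(k)$ work each, giving (\ref{for-roundtime}). Adding the $O(k2^{2k})$ max-flow cost yields the claimed $O(k^2 2^{2k})$ total running time. I expect essentially all the effort to be concentrated in the case-2 argument above; the cycle case is immediate from flow conservation, and the max-flow step is a direct invocation of a standard subroutine on a graph whose size depends only on $k$.
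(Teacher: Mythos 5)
Your proposal is correct and follows essentially the same route as the paper: the same reduction to a small capacitated max flow problem on the region--ball bipartite graph with the $O(|V|\cdot|E|)=O(k2^{2k})$ bound, and the same path/cycle-growing rounding with the $\pm\delta$ adjustments, including the key observation that in case (2) both endpoints must lie on the ball side because each region vertex's total flow equals the integer $n_{*_j}$, and the choice $\delta\leq\min\{m_1-L,\,U-m_h\}$ to preserve the demand and capacity constraints. The iteration count $k2^{k}$ times $O(k)$ work per adjustment matches the paper's $O(k^{2}2^{k})$ rounding cost, giving the claimed $O(k^{2}2^{2k})$ total.
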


\subsection{Balanced Partition for $k$-BMedian and $k$-BMeans}
\label{sec-ext-select}
%
%
%
In this section, we extend the idea of Section~\ref{sec-feasible} to compute the best balanced partition for $k$-BMedian and $k$-BMeans. We also assume that the 
$k$ cluster centers $\{c_1, \cdots, c_k\}$ are fixed. 
Different to the SoL (\ref{for-llp1})-(\ref{for-llp2}) for $k$-BCenter, we need to add an objective function to minimize the sum of distances (resp., squared distances) for the $k$-BMedian (resp., $k$-BMeans). We build a linear programming model instead. 

To better illustrate our idea, we refine the spatial partition in Section~\ref{sec-feasible} with respect to a given small $\epsilon>0$. Suppose we compute the $kn$ pairwise distances $\{||p-c_j||\mid p\in P, 1\leq j\leq k\}$ in advance, and let the largest and smallest distances (except the $0$ distance, if it exists) be $r_{max}$ and $r_{min}$ respectively. Without loss of generality, we assume $L=\log_{1+\epsilon}\frac{r_{max}}{r_{min}}$ is an integer. For each $c_j$, we draw 
\begin{eqnarray}
L+1=O( \frac{1}{\epsilon}\log \frac{r_{max}}{r_{min}})
\end{eqnarray}
balls $\mathcal{B}_{j, 0}, \cdots, \mathcal{B}_{j, L}$ co-centered at $c_j$, where the radii are 
\begin{eqnarray}
r_{min},\hspace{0.1in} (1+\epsilon)r_{min}, \hspace{0.1in} (1+\epsilon)^2r_{min},\cdots, \hspace{0.1in} (1+\epsilon)^{L}r_{min}=r_{max} 
\end{eqnarray}
correspondingly. 
Overall, the $k(L+1)$ balls $\{\mathcal{B}_{j,l}\mid 1\leq j\leq k, 0\leq l\leq L\}$ partition the Euclidean space into at most $(L+1)^k=O(L^k)$ regions: let $0\leq l_j\leq L$ for each $1\leq j\leq k$, the corresponding region is denoted as 
\begin{eqnarray}
\mathcal{R}_{(l_1, \cdots,  l_k)}= (\mathcal{B}_{1, l_1}\setminus\mathcal{B}_{1, l_1-1})\cap\cdots\cap(\mathcal{B}_{k, l_k}\setminus\mathcal{B}_{k, l_k-1}). \label{for-ext-region}
\end{eqnarray}
If $l_j=0$, we simply let $\mathcal{B}_{j, l_j-1}$ be the single point $c_j$ in (\ref{for-ext-region}). We ignore the region outside $\cap^k_{j=1}\mathcal{B}_{j, L}$, since no point locates there (due to the definition of $r_{max}$). Note that the definition of $\mathcal{R}_{(l_1, \cdots,  l_k)}$ is different from the that of $\mathcal{R}_{(j_1, j_2, \cdots, j_t)}$ in Section~\ref{sec-feasible}.

\begin{figure}[h]
\centering
 \includegraphics[height=1.7in]{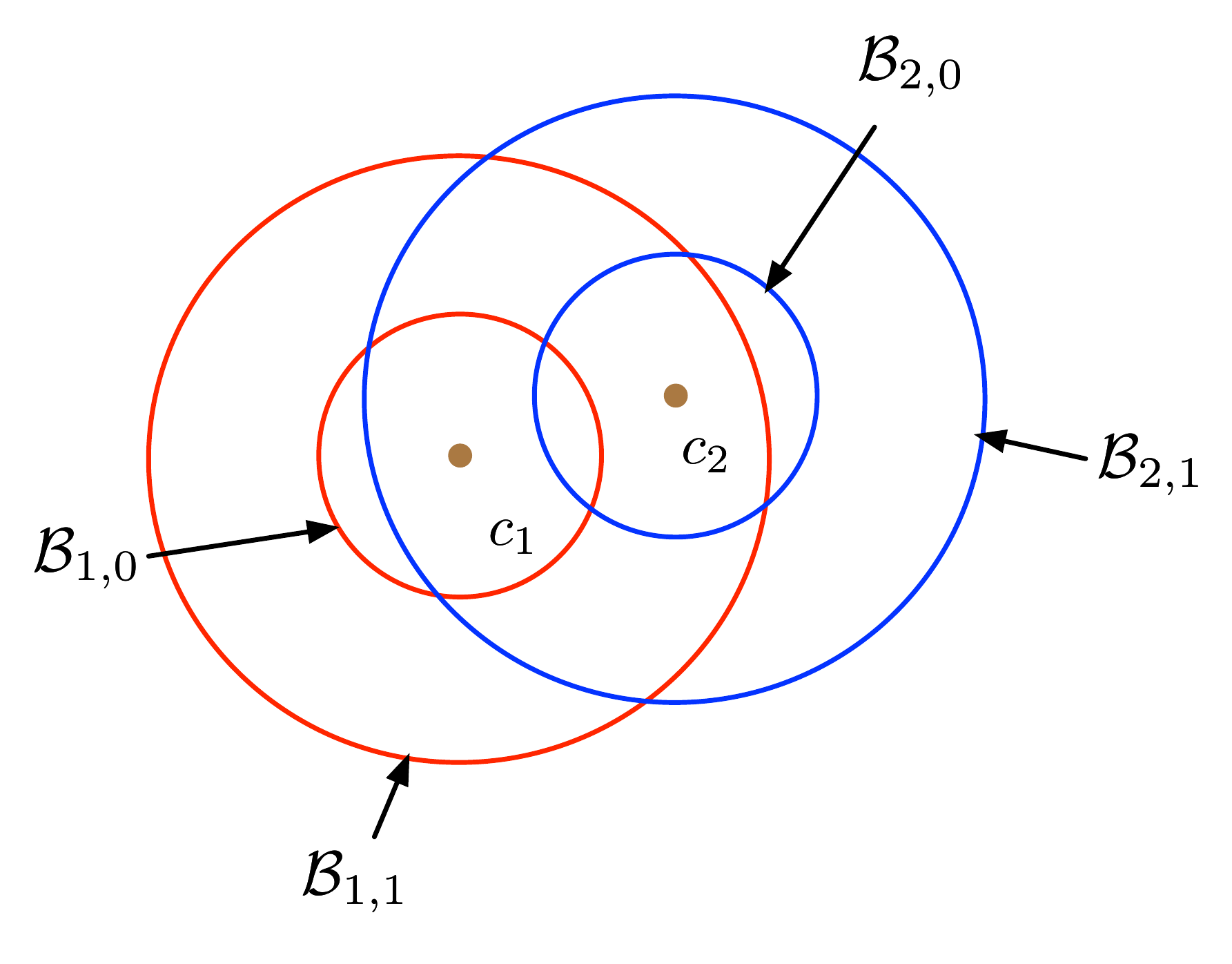}
  \caption{We show an example with $k=2$ and $L=1$. Some regions may not be connected, e.g., $\mathcal{R}_{(1,1)}$.} 
   \label{fig-partition}
\end{figure}

See Figure~\ref{fig-partition} for an illustration. For each $\mathcal{R}_{(l_1, \cdots,  l_k)}$, denote by $n_{(l_1, \cdots,  l_k)}$ the number of covered points.
 The intuition behind our spatial partition is to discretize those $kn$ distances, such that we can handle the problem more conveniently. 

Similar to the SoL in Section~\ref{sec-feasible}, we assign $k$ non-negative variables 
\begin{eqnarray}
x^{1}_{(l_1, l_2, \cdots, l_k)},\hspace{0.1in} \cdots,\hspace{0.1in} x^{k}_{(l_1, l_2, \cdots, l_k)}
\end{eqnarray}
for each region  $\mathcal{R}_{(l_1, \cdots,  l_k)}$; each $x^{j}_{(l_1, l_2, \cdots, l_k)}$ indicates the number of points assigned to the $j$-th cluster from $\mathcal{R}_{(l_1, \cdots,  l_k)}$. The index $l_j$ is called the \textbf{``level order''} of the variable $x^{j}_{(l_1, l_2, \cdots, l_k)}$, which indicates that how far from those points to their cluster center. In addition, we have the coefficients $\{\alpha_0, \alpha_1, \cdots, \alpha_L\}$ with each $\alpha_{l}=(1+\epsilon)^{l}r_{min}$ for $0\leq l\leq L$. Denote by $\pi$ the set of $k$-tuples $\{(l_1, \cdots, l_k)\mid 0\leq l_j\leq L, \forall 1\leq j\leq k\}$. Then, we have the following linear programming (LP).
\begin{eqnarray}
&&\min\sum_{(l_1, \cdots, l_k)\in\pi}\sum^k_{j=1}\alpha_{l_j}x^j_{(l_1, l_2, \cdots, l_k)},\label{for-lp1}\\
\forall (l_1, l_2, \cdots, l_k)\in\pi, &&\hspace{0.1in} x^1_{(l_1, l_2, \cdots, l_k)}+\cdots+x^k_{(l_1, l_2, \cdots, l_k)}=n_{(l_1, \cdots,  l_k)},\label{for-lp2}\\
\forall j=1, \cdots, k, &&\hspace{0.1in} L\leq \sum_{(l_1, l_2, \cdots, l_k)\in\pi} x^j_{(l_1, l_2, \cdots, l_k)}\leq U. \label{for-lp3}
\end{eqnarray}

It is easy to have the complexity of the LP.

\begin{lemma}
\label{lem-lpsize}
The  LP (\ref{for-lp1})-(\ref{for-lp3}) has $k(L+1)^k$ variables and $O(L^k)$ constraints in total.
\end{lemma}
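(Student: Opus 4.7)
The lemma is a direct counting statement about the LP defined by (\ref{for-lp1})--(\ref{for-lp3}), so my plan is simply to enumerate the variables and constraints and read off the asymptotic sizes. There is no real obstacle; the only thing to be careful about is matching the index set $\pi$ correctly to both the variable and constraint counts, and confirming that $k$ being treated as a constant absorbs the $k$ factors into the $O(\cdot)$ notation.

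First, I would count the regions. By definition, $\pi$ is the set of $k$-tuples $(l_1,\dots,l_k)$ with $0\le l_j\le L$ for each $j$, so $|\pi|=(L+1)^k$, and this is exactly the number of distinct regions $\mathcal{R}_{(l_1,\dots,l_k)}$. Since the construction assigns $k$ non-negative variables $x^{1}_{(l_1,\dots,l_k)},\dots,x^{k}_{(l_1,\dots,l_k)}$ to each such region, the total number of variables is $k\cdot|\pi|=k(L+1)^k$, establishing the first half of the claim.

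Next, I would count constraints by type. The equality (\ref{for-lp2}) is imposed once per $k$-tuple in $\pi$, contributing $(L+1)^k$ equality constraints. The two-sided inequality (\ref{for-lp3}) is imposed once per cluster index $j\in\{1,\dots,k\}$, contributing at most $2k$ linear inequalities (splitting each two-sided bound into a lower and an upper bound). Summing,
\begin{equation*}
(L+1)^k + 2k.
\end{equation*}
Since $k$ is treated as a constant in our setting, this is $O(L^k)$, which yields the second half of the claim.

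Finally, I would add a one-line remark that the non-negativity requirements $x^{j}_{(l_1,\dots,l_k)}\ge 0$ are standard variable bounds rather than structural constraints and, even if counted, only add another $k(L+1)^k$ bound constraints, which remains $O(L^k)$ under constant $k$. That completes the proof with no substantive difficulty beyond bookkeeping.
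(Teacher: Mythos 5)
Your proposal is correct and matches the paper's (implicit) argument: the paper states the lemma as an immediate consequence of there being $(L+1)^k$ regions with $k$ variables each, plus one equality per region and one two-sided size bound per cluster, exactly as you counted. The absorption of the constant factors $2^k$ and $2k$ into $O(L^k)$ under constant $k$ is the same bookkeeping the paper relies on.
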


More importantly, we reveal the relation between the LP model and $k$-BMedian in Lemma~\ref{lem-lp2} (we will show that the result can be easily extended for $k$-BMeans later).

\begin{lemma}
\label{lem-lp2}
\textbf{(\rmnum{1})} Any balanced partition on $P$ corresponds to a feasible integral solution of (\ref{for-lp2})-(\ref{for-lp3}), and vice versa. \textbf{(\rmnum{2})} If the resulting $k$-BMedian cost of a balanced partition is $W\geq 0$, the corresponding feasible integral solution will have the objective value of (\ref{for-lp1}) within $[W, (1+\epsilon)W)$.
\end{lemma}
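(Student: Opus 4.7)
The plan is to treat the two parts of Lemma~\ref{lem-lp2} separately, with part (i) a bijection argument and part (ii) a per-point sandwich inequality driven by the geometric-series radii $\alpha_0,\ldots,\alpha_L$.

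For part (i), I would set up a direct correspondence. Given a balanced partition $P_1,\ldots,P_k$, every point $p\in P$ lies in a unique shell $\mathcal{B}_{j,l_j}\setminus\mathcal{B}_{j,l_j-1}$ around each center $c_j$, hence in a unique region $\mathcal{R}_{(l_1,\ldots,l_k)}$; I define $x^j_{(l_1,\ldots,l_k)}:=|P_j\cap\mathcal{R}_{(l_1,\ldots,l_k)}|$. Then (\ref{for-lp2}) holds because $\{P_j\}$ partitions the $n_{(l_1,\ldots,l_k)}$ points sitting in $\mathcal{R}_{(l_1,\ldots,l_k)}$, and (\ref{for-lp3}) holds because the cluster sizes $|P_j|=\sum_{(l_1,\ldots,l_k)\in\pi}x^j_{(l_1,\ldots,l_k)}$ lie in $[L,U]$ by definition of balancedness. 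Conversely, given a nonnegative integer feasible solution, I construct a balanced partition by arbitrarily dispatching, inside each region $\mathcal{R}_{(l_1,\ldots,l_k)}$, exactly $x^j_{(l_1,\ldots,l_k)}$ of its $n_{(l_1,\ldots,l_k)}$ points to cluster $P_j$; (\ref{for-lp2}) makes this dispatching well-defined, and (\ref{for-lp3}) ensures the resulting sizes are in $[L,U]$.

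For part (ii), I would use the correspondence from (i) and bound the contribution of each point. For any $p\in\mathcal{R}_{(l_1,\ldots,l_k)}$ and any index $j$, the construction of the nested balls gives
\[
\frac{\alpha_{l_j}}{1+\epsilon} \;<\; \|p-c_j\| \;\leq\; \alpha_{l_j}
\]
whenever $l_j\geq 1$, and $\|p-c_j\|\leq\alpha_0=r_{\min}$ when $l_j=0$. Summing the upper bound over all $p\in P_j$ and all $j$ yields $W\leq\sum_{(l_1,\ldots,l_k)\in\pi}\sum_{j=1}^k\alpha_{l_j}x^j_{(l_1,\ldots,l_k)}$, which is the lower end $W$ of the interval. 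Summing the reciprocal bound gives the upper end: the LP objective is strictly less than $(1+\epsilon)\sum_{j,\,p\in P_j}\|p-c_j\|=(1+\epsilon)W$.

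The main technical subtlety I expect is the edge case $l_j=0$, because $\alpha_0/\|p-c_j\|$ is not automatically bounded by $1+\epsilon$ when $\|p-c_j\|$ can be arbitrarily small. The definition of $r_{\min}$ as the smallest \emph{nonzero} distance in the $kn$-family $\{\|p-c_j\|\}$ saves the nondegenerate case: any point genuinely lying in $\mathcal{B}_{j,0}\setminus\{c_j\}$ must satisfy $\|p-c_j\|\geq r_{\min}$, so together with the ball constraint $\|p-c_j\|\leq r_{\min}$ we get equality $\|p-c_j\|=r_{\min}=\alpha_0$, and the ratio is exactly $1<1+\epsilon$. The only genuinely problematic points are those coinciding with a center, which contribute $0$ to $W$ but $r_{\min}$ to the LP if assigned to that coincident center; I would handle these by a preprocessing step that assigns each such point to its coincident center for free and removes it from the instance (the balance bounds $L,U$ can be adjusted accordingly), so that the sandwich bound applies to the remaining points with strictly positive distances to all centers. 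After this adjustment the two inequalities combine to give the claimed interval $[W,(1+\epsilon)W)$.
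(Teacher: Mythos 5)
Your proof is correct and takes essentially the same route as the paper's: part (\rmnum{1}) is the same counting correspondence ($x^j_{(l_1,\ldots,l_k)}=|P_j\cap\mathcal{R}_{(l_1,\ldots,l_k)}|$ one way, arbitrary dispatching within each region the other way), and part (\rmnum{2}) is the same per-point sandwich $\frac{\alpha_{l_j}}{1+\epsilon}<||p-c_j||\leq\alpha_{l_j}$ summed over clusters. Your extra discussion of points coinciding with a center addresses a degenerate case the paper silently glosses over (its shell definition with $\mathcal{B}_{j,-1}=\{c_j\}$ excludes such points from every region), so it is a careful refinement of, not a departure from, the paper's argument.
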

\begin{proof}
We prove \textbf{(\rmnum{1})} first. Given a balanced partition on $P$, we set each variable $x^j_{(l_1, l_2, \cdots, l_k)}$ to be the number of points assigned to the $j$-th cluster from the region $\mathcal{R}_{(l_1, \cdots,  l_k)}$. Because the partition is balanced, the constraints (\ref{for-lp2}) and (\ref{for-lp3}) are satisfied, i.e, a feasible integral solution is obtained. Given a feasible integral solution of (\ref{for-lp2})-(\ref{for-lp3}), we can simply generate a balanced partition as follows. For each variable $x^j_{(l_1, l_2, \cdots, l_k)}$, we just arbitrarily pick $x^j_{(l_1, l_2, \cdots, l_k)}$ points from the region $\mathcal{R}_{(l_1, \cdots,  l_k)}$ and assign them to the $j$-th cluster. The constraints (\ref{for-lp2}) and (\ref{for-lp3}) guarantee that the obtained partition is balanced. So \textbf{(\rmnum{1})} is true.

Suppose the given balanced partition has the clustering cost $W$, and we consider it corresponding feasible integral solution. From \textbf{(\rmnum{1})}, we know that there is a set of $x^j_{(l_1, l_2, \cdots, l_k)}$ points assigned to the $j$-th cluster. Moreover, the contribution of these $x^j_{(l_1, l_2, \cdots, l_k)}$ points to the objective function (\ref{for-lp1}) is $\alpha_{l_j}x^j_{(l_1, l_2, \cdots, l_k)}$. In fact, we know their contribution to the clustering cost should be within $(\frac{1}{1+\epsilon}\alpha_{l_j}x^j_{(l_1, l_2, \cdots, l_k)},\alpha_{l_j}x^j_{(l_1, l_2, \cdots, l_k)}]$. Overall, the objective value of (\ref{for-lp1}) should be within $[W, (1+\epsilon)W)$.
\qed
\end{proof}

Lemma~\ref{lem-lp2} directly implies the following result. 

\begin{lemma}
\label{lem-lp1}
The optimal objective value in (\ref{for-lp1}) is at most $(1+\epsilon)$ times the minimum $k$-BMedian cost induced by the given $k$ cluster centers $\{c_1, \cdots, c_k\}$.
\end{lemma}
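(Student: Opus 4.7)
The plan is to derive Lemma~\ref{lem-lp1} as a direct corollary of Lemma~\ref{lem-lp2}, exploiting the fact that the LP (\ref{for-lp1})--(\ref{for-lp3}) is a relaxation of an integer program whose feasible integral solutions are in bijection with the balanced partitions of $P$.

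First, let $W^*$ denote the minimum $k$-BMedian cost attained over all balanced partitions of $P$ with the fixed centers $\{c_1,\dots,c_k\}$, and let $\mathcal{P}^*=\{P^*_1,\dots,P^*_k\}$ be a balanced partition realizing $W^*$. By Lemma~\ref{lem-lp2}(\rmnum{1}), the partition $\mathcal{P}^*$ gives rise to a feasible integral assignment $\{\bar{x}^{j}_{(l_1,\dots,l_k)}\}$ that satisfies the constraints (\ref{for-lp2})--(\ref{for-lp3}). By Lemma~\ref{lem-lp2}(\rmnum{2}), the value of the objective (\ref{for-lp1}) at this integral solution lies in $[W^*,(1+\epsilon)W^*)$, so in particular it is strictly less than $(1+\epsilon)W^*$.

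Next, since $\{\bar{x}^{j}_{(l_1,\dots,l_k)}\}$ is a feasible solution of the LP (the integrality is a stronger property than what the LP requires), the optimal objective value of the LP is no larger than the objective value attained by $\{\bar{x}^{j}_{(l_1,\dots,l_k)}\}$. Chaining the two inequalities yields
\begin{equation*}
\text{OPT}_{\text{LP}}\;\leq\;\sum_{(l_1,\dots,l_k)\in\pi}\sum_{j=1}^{k}\alpha_{l_j}\,\bar{x}^{j}_{(l_1,\dots,l_k)}\;<\;(1+\epsilon)W^*,
\end{equation*}
which is exactly the claim.

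I do not anticipate any genuine obstacle here: the entire content of the lemma is already packaged inside Lemma~\ref{lem-lp2}. The only subtlety worth spelling out is why the minimum in the LP is well-defined, i.e.\ why at least one feasible solution exists; but this is immediate from the remark following Definition~\ref{def-problem2} guaranteeing that $1\leq L\leq\lfloor n/k\rfloor\leq\lceil n/k\rceil\leq U$, so a balanced partition exists and by Lemma~\ref{lem-lp2}(\rmnum{1}) it maps to a feasible LP solution. Thus the proof reduces to a one-line instantiation of Lemma~\ref{lem-lp2}(\rmnum{2}) at the optimal balanced partition.
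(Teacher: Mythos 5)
Your proposal is correct and is exactly the paper's argument: the paper states that Lemma~\ref{lem-lp1} follows directly from Lemma~\ref{lem-lp2}, and your write-up simply spells out that one-line implication (map the optimal balanced partition to a feasible integral LP solution via part (\rmnum{1}), bound its objective by part (\rmnum{2}), and compare with the LP optimum). No gaps.
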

%
%
%
%
%
%
%
%
%
%
%
%
%
%
\begin{remark}
Lemma~\ref{lem-lp1} can be extended for $k$-BMeans by two modifications. First, we need to replace each coefficient $\alpha_l$ by $\alpha^2_l$ in the objective function (\ref{for-lp1}). In addition, the approximation factor ``$1+\epsilon$'' becomes ``$(1+\epsilon)^2=1+O(\epsilon)$'' in Lemma~\ref{lem-lp1}.
\end{remark}

Similar to Section~\ref{sec-feasible}, the remaining question is to solve (\ref{for-lp1})-(\ref{for-lp3}) and guarantee that the optimal solution is an integral solution. We build a similar graph as Figure.~\ref{fig-kbc_match}, where the number of vertices in the left column becomes larger since we have $O(L^k)$ regions now; we still have $k$ vertices in the right column, and each vertex corresponds to an individual cluster center $c_j$. Thus, $|E|=O(k L^k)$ and $|V|=O(L^k)$. In addition, the object here is finding a minimum cost maximum flow instead: each edge connecting  $\mathcal{R}_{(l_1, \cdots,  l_k)}$ and $c_j$ has the cost $\alpha_{l_j}$ (or $\alpha^2_l$ for $k$-BMeans). An optimal solution can be achieved in 
\begin{eqnarray}
O(|V|\cdot |E|^2\cdot \log^2 |V|)=O(k^4 L^{3k}\log^2 L) \label{for-matchtime2}
\end{eqnarray}
time by existing algorithms~\cite{erickson}.

To transform the solution to an integral solution, we adopt the same strategy in Section~\ref{sec-feasible}. The only place we need to pay more attention is that our adjustments should not cause any loss in the objective value (\ref{for-lp1}).  
 We take the case (1) as an example and use the same notations in~(\ref{for-newcircle}) and (\ref{for-variable-case1}). For ease of understanding, we denote the level orders of $x^j_{*_j}$ and $x^{j-1}_{*_{j}}$ as $l_j$ and $l'_{j-1}$ respectively ($l'_0$ indicates $l'_h$). 

\begin{lemma}
\label{lem-ext-round}
If case (1) happens, we have $\sum^h_{j=1}(\alpha_{l'_{j-1}}-\alpha_{l_j})=0$.
\end{lemma}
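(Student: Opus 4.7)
The plan is to exploit optimality of the min-cost max-flow solution against which we are rounding. Recall that before the rounding phase we obtain a solution of (\ref{for-lp1})-(\ref{for-lp3}) that is an optimal min-cost maximum flow in the bipartite graph $G$. The cycle in case (1) has even length and alternates between vertices of the left and right columns, so each $v_{\mathcal{R}_j}$ and each $v_{\mathcal{B}_j}$ touches exactly two edges of the cycle. This is precisely the structure that lets us shift $+\delta$ along every other edge and $-\delta$ along the remaining edges while preserving the conservation constraint (\ref{for-lp2}) at each $v_{\mathcal{R}_j}$ and the balance constraint (\ref{for-lp3}) at each $v_{\mathcal{B}_j}$. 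For small enough $\delta$ (in particular the $\delta$ of (\ref{for-round})), all non-negativity constraints are preserved too, so both the $+\delta$ and the $-\delta$ perturbations give feasible solutions.

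The next step is to write down the change in the objective (\ref{for-lp1}) under a $+\delta$ perturbation. Following the cycle (\ref{for-newcircle}) with the associated variables (\ref{for-variable-case1}), the signs alternate $-\delta, +\delta, -\delta, +\delta, \ldots$, namely $x^{j}_{*_j} \mapsto x^{j}_{*_j} - \delta$ and $x^{j-1}_{*_j} \mapsto x^{j-1}_{*_j} + \delta$ for $j=1,\ldots,h$ (with the convention $x^0_{*_1} = x^h_{*_1}$). Since the coefficient of $x^j_{*_j}$ in (\ref{for-lp1}) is $\alpha_{l_j}$ by the definition of the level order, and the coefficient of $x^{j-1}_{*_j}$ is $\alpha_{l'_{j-1}}$, the net change in the objective equals
\begin{eqnarray*}
\Delta = \delta \sum_{j=1}^{h}\bigl(\alpha_{l'_{j-1}} - \alpha_{l_j}\bigr).
\end{eqnarray*}

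The same calculation with $-\delta$ in place of $+\delta$ yields $-\Delta$. Because the current flow minimizes the cost among all feasible flows of the same value, neither the $+\delta$ nor the $-\delta$ perturbation can strictly decrease the objective. Hence $\Delta \geq 0$ and $-\Delta \geq 0$, forcing $\Delta = 0$, i.e. $\sum_{j=1}^{h}(\alpha_{l'_{j-1}} - \alpha_{l_j}) = 0$, which is exactly the claim.

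The only mildly delicate point is bookkeeping: one must verify that the alternation $(-\delta,+\delta)$ along the cycle truly preserves \emph{every} constraint (both the equalities at the region vertices and the two-sided inequalities at the ball vertices), and that the level-order labels $l_j,l'_{j-1}$ line up with the right variables as the cycle traverses the bipartite graph. Once this is set up carefully, the rest is a one-line optimality argument, and indeed the same reasoning will yield the analogous identity for the case (2) path in a straightforward way.
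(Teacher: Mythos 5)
Your proposal is correct and follows essentially the same argument as the paper: perturb the fractional flow along the alternating cycle in both directions, observe that each perturbation preserves feasibility (the paper makes this explicit by defining two step sizes $\delta_1,\delta_2>0$, one per direction, both positive because the cycle edges carry strictly fractional flow), and invoke optimality of the min-cost solution to force the net coefficient sum $\sum_{j=1}^{h}(\alpha_{l'_{j-1}}-\alpha_{l_j})$ to vanish. The only nuance is that the $\delta$ of (\ref{for-round}) guards feasibility for one direction only; for the reverse perturbation you need the analogous positive quantity built from $\lceil x^j_{*_j}\rceil - x^j_{*_j}$ and $x^{j-1}_{*_j}-\lfloor x^{j-1}_{*_j}\rfloor$, exactly as the paper does.
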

\begin{proof}
Different to (\ref{for-round}), we can consider two  directions to find the small number $\delta$. Let
\begin{eqnarray}
\delta_1&=&\min\{x^j_{*_j}-\lfloor x^j_{*_j}\rfloor, \lceil x^{j-1}_{*_{j}}\rceil -x^{j-1}_{*_{j}}\mid 1\leq j\leq h\};\\
\delta_2&=&\min\{\lceil x^j_{*_j}\rceil- x^j_{*_j}, x^{j-1}_{*_{j}} -\lfloor x^{j-1}_{*_{j}}\rfloor\mid 1\leq j\leq h\}.
\end{eqnarray}

If $\sum^h_{j=1}(\alpha_{l'_{j-1}}-\alpha_{l_j})<0$, we can safely round the variables 
\begin{eqnarray}
x^1_{*_1}-\delta_1,x^1_{*_2}+\delta_1, x^2_{*_2}-\delta_1,x^2_{*_3}+\delta_1, \cdots, x^h_{*_h}-\delta_1,x^h_{*_1}+\delta_1
\end{eqnarray}
without violating any constraint. Moreover, the objective value of (\ref{for-lp1}) is decreased by $\big(\sum^h_{j=1}(\alpha_{l_j}-\alpha_{l'_{j-1}})\big)\delta_1>0$. 

For the other case, if $\sum^h_{j=1}(\alpha_{l'_{j-1}}-\alpha_{l_j})>0$, we can round the variables to the other direction to be 
\begin{eqnarray}
x^1_{*_1}+\delta_2,x^1_{*_2}-\delta_2, x^2_{*_2}+\delta_2,x^2_{*_3}-\delta_2, \cdots, x^h_{*_h}+\delta_2,x^h_{*_1}-\delta_2,
\end{eqnarray}
and the objective value is decreased by $\big(\sum^h_{j=1}(\alpha_{l'_{j-1}}-\alpha_{l_j})\big)\delta_2>0$.

These two cases are both contradict to the fact that we already obtain the optimal solution of (\ref{for-lp1})-(\ref{for-lp3}). Thus,  $\sum^h_{j=1}(\alpha_{l'_{j-1}}-\alpha_{l_j})=0$.
\qed
\end{proof}

According to Lemma~\ref{lem-ext-round}, we can freely round the variables along a case (1) circle to either direction, and add at least one more integer variable. For case (2), we can prove the same result by the same manner. Similar to (\ref{for-roundtime}), all the adjustments cost 
\begin{eqnarray}
|E|\cdot O(k)=O(k^2 L^k).\label{for-roundtime2}
\end{eqnarray}
Combining our above analysis and the time complexities (\ref{for-matchtime2}) and (\ref{for-roundtime2}), we have the following theorem.
%
\begin{theorem}
\label{lem-ext-integer}
We can compute an integral optimal solution for the LP (\ref{for-lp1})-(\ref{for-lp3}) in $O(k^4$ $ L^{3k}$ $\log^2 L)$ time. 
\end{theorem}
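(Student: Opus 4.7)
The plan is to mirror the two-stage strategy already used in Section~\ref{sec-feasible} for $k$-BCenter, but now on the larger bipartite graph induced by the refined spatial partition, and with Lemma~\ref{lem-ext-round} playing the role that guaranteed cost-preservation plays in the new setting.

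First, I would model the LP (\ref{for-lp1})-(\ref{for-lp3}) as a min-cost max-flow instance on a bipartite graph $G=(V,E)$. The left side has one vertex per region $\mathcal{R}_{(l_1,\cdots,l_k)}$ with supply $n_{(l_1,\cdots,l_k)}$, the right side has one vertex per cluster center $c_j$ with demand $L$ and capacity $U$, and for every variable $x^j_{(l_1,\cdots,l_k)}$ there is an edge of cost $\alpha_{l_j}$ (or $\alpha_{l_j}^2$ for $k$-BMeans) from $\mathcal{R}_{(l_1,\cdots,l_k)}$ to $c_j$. By Lemma~\ref{lem-lpsize}, $|V|=O(L^k)$ and $|E|=O(k L^k)$. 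Feasibility of the LP is equivalent to the max flow having value $n$, and the optimum LP value coincides with the min cost at maximum flow. Applying a standard strongly polynomial min-cost max-flow algorithm (e.g.\ the Orlin-type bound cited in~\cite{erickson}) yields an optimum in $O(|V|\cdot|E|^2\cdot\log^2|V|)=O(k^4 L^{3k}\log^2 L)$ time, matching the bound claimed in (\ref{for-matchtime2}).

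Second, the optimum returned may be fractional, so I would run the same elimination procedure used in Section~\ref{sec-feasible}: repeatedly extract a maximal alternating path from the subgraph of fractional-flow edges, which, by the same degree argument, must terminate either in a cycle (case (1)) or in a path whose two endpoints both lie on the right side (case (2)). On such a structure, shifting $\pm\delta$ along alternating edges preserves every supply/demand/capacity constraint, and by choosing $\delta$ as in (\ref{for-round}) at least one more edge becomes integral while no previously integral edge is disturbed. The only new worry is that these adjustments might change the LP objective. Here Lemma~\ref{lem-ext-round} does the work: on any case (1) cycle, $\sum_{j=1}^{h}(\alpha_{l'_{j-1}}-\alpha_{l_j})=0$, and an analogous identity (proved by the same two-directional argument, using the boundary slacks $m_1-L$ and $U-m_h$) holds for case (2); consequently every adjustment is cost-neutral, so integrality is restored without losing optimality.

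Finally, I would tally the running time. Each adjustment costs $O(k)$ to traverse and removes at least one fractional edge, so the total rounding cost is $|E|\cdot O(k)=O(k^2 L^k)$, as in (\ref{for-roundtime2}). Adding the min-cost max-flow cost gives
\begin{equation*}
O(k^4 L^{3k}\log^2 L)+O(k^2 L^k)=O(k^4 L^{3k}\log^2 L),
\end{equation*}
which is the stated bound. The main obstacle is really the cost-neutrality claim in the rounding phase; once Lemma~\ref{lem-ext-round} (and its case (2) analogue) is in hand, everything else is a straightforward transcription of the $k$-BCenter argument with the new size parameters $|V|=O(L^k)$ and $|E|=O(kL^k)$.
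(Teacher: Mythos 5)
Your proposal is correct and follows essentially the same route as the paper: the same min-cost max-flow formulation on the region--center bipartite graph with $|V|=O(L^k)$, $|E|=O(kL^k)$ giving the $O(k^4L^{3k}\log^2 L)$ bound of (\ref{for-matchtime2}), followed by the same cycle/path rounding from Section~\ref{sec-feasible}, with Lemma~\ref{lem-ext-round} (and its case~(2) analogue, argued exactly as the paper does, via the two-directional perturbation with the slacks at the endpoint balls) guaranteeing cost-neutrality, and the same $O(k^2L^k)$ accounting for the adjustments as in (\ref{for-roundtime2}).
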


\section{Balanced Clustering Algorithms}
\label{sec-high}

In this section, we propose our algorithms for $k$-BCenter, $k$-BMedian, and $k$-BMeans separately. In general, we need to find the qualified candidates for cluster centers, and then apply the ideas in Section~\ref{sec-partition}  to form the balanced clusterings.


\subsection{Algorithm for $k$-BCenter}
\label{sec-candidate}

Gonazlez's seminal paper~\cite{G85} provided an elegant $2$-approximation algorithm for $k$-center clustering in any dimension. Basically, the algorithm iteratively selects $k$ points from the input, where the initial point is arbitrarily selected, and each following $j$-th step ($2\leq j\leq k$) chooses the point having the largest minimum distance to the already selected $j-1$ points. Finally, it is able to show that these $k$ points induce a $2$-approximation for $k$-center clustering if each input point is assigned to its nearest neighbor of these $k$ points. 

Denote by $S=\{s_1, s_2, \cdots, s_k\}$ these ordered $k$ points selected by Gonazlez's algorithm, and define the Cartesian product $\underbrace{S\times\cdots\times S}_k$ as $S^k$, i.e., 
\begin{eqnarray}
S^k=\{(s'_1, s'_2, \cdots, s'_k)\mid s'_j\in S, 1\leq j\leq k\}.
\end{eqnarray} 
Then we have the following lemma. 

\begin{lemma}
\label{lem-4app}
There exists a $k$-tuple points from $S^k$ yielding a $4$-approximation for $k$-BCenter. 
\end{lemma}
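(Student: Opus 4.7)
The plan is to show that the unconstrained Gonzalez centers $S$ form a sufficiently dense sample of $P$ so that some $k$-tuple drawn with repetition from $S$ can take the place of the optimal balanced centers while inflating the $k$-BCenter cost by only a factor $4$.

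First I would fix an optimal balanced clustering with radius $r^*$, centers $c^*_1, \ldots, c^*_k$, and balanced partition $P^*_1, \ldots, P^*_k$. Because every balanced feasible solution is also a feasible (unconstrained) $k$-center solution, the optimal unconstrained $k$-center radius $r^{**}$ satisfies $r^{**} \leq r^*$; and Gonzalez's analysis guarantees that every $p \in P$ has distance at most $2r^{**} \leq 2r^*$ to its nearest point in $S$. Next, since $L \geq 1$ forces $P^*_j \neq \emptyset$, for each $j$ I would pick any $p_j \in P^*_j$ and let $s_{\pi(j)} \in S$ denote its nearest Gonzalez point; the triangle inequality then gives
\[
||c^*_j - s_{\pi(j)}|| \leq ||c^*_j - p_j|| + ||p_j - s_{\pi(j)}|| \leq r^* + 2r^* = 3r^*.
\]

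With the $k$-tuple $(s_{\pi(1)}, \ldots, s_{\pi(k)}) \in S^k$ designated as candidate centers and the optimal partition $P^*_1, \ldots, P^*_k$ kept as the assignment, a second triangle-inequality step yields $||p - s_{\pi(j)}|| \leq r^* + 3r^* = 4r^*$ for every $p \in P^*_j$. Since the partition is the optimal balanced one, the size constraints $[L,U]$ are automatically satisfied, and the induced $k$-BCenter cost of this tuple is at most $4r^*$, giving a $4$-approximation.

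The main subtle point, rather than a genuine obstacle, is that the map $\pi: \{1,\ldots,k\} \to \{1,\ldots,k\}$ need not be injective: two distinct optimal centers may well share the same nearest neighbor in $S$. This is precisely why the lemma is phrased in terms of the Cartesian product $S^k$ instead of an ordering of the distinct elements of $S$; repeated entries in the chosen tuple simply correspond to two or more clusters whose centers coincide, which is allowed by Definition~\ref{def-problem} because cluster sizes depend only on the partition, not on the centers being distinct.
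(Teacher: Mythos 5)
Your proof is correct, and it reaches the bound by the same final assembly as the paper---keep the optimal balanced partition $P^*_1,\ldots,P^*_k$, re-center each cluster at a nearby Gonzalez point, and accept repeated entries in the tuple, which is exactly why the lemma quantifies over $S^k$---but it derives the key distance bound by a genuinely different route. The paper argues directly on the greedy process: it splits into the case where the $k$ Gonzalez points land in distinct optimal clusters (giving a $2$-approximation) and the case of a first collision $(s_{j_1},s_{j_2})$ inside one optimal cluster, where $\|s_{j_1}-s_{j_2}\|\le 2r_{opt}$ and the farthest-point selection rule bound the distance from every remaining cluster to the prefix $\{s_1,\ldots,s_{j_2-1}\}$, after which the chain $\|q-s_{l(j)}\|\le\|q-p\|+\|p-s_{l(j)}\|\le 4r_{opt}$ finishes. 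You instead invoke Gonzalez's $2$-approximation guarantee as a black box against the \emph{unconstrained} optimum $r^{**}$, observe that $r^{**}\le r^*$ since dropping the balance constraint can only decrease the optimal radius, conclude that every point (in particular one representative $p_j$ per optimal cluster) is within $2r^*$ of $S$, and then apply two triangle inequalities to get $4r^*$. Your version avoids the case analysis entirely and is more modular---it would work verbatim for any seeding procedure guaranteeing that every input point is within $2r^{**}$ of the selected set---whereas the paper's self-contained greedy analysis additionally shows that in the no-collision case the tuple is even a $2$-approximation, though this does not improve the worst-case factor (and indeed the appendix example shows $4$ is tight). Your remark on a non-injective $\pi$ (repeated centers are permissible because Definition~\ref{def-problem} constrains only the partition sizes) matches how the paper itself uses $S^k$, where the centers $s_{l(j)}$ for $j\ge j_2$ necessarily repeat points already used.
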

\begin{proof}
Suppose the unknown $k$ optimal balanced clusters are $C_1, C_2, \cdots, C_k$, and the optimal radius is $r_{opt}$. If the selected $k$ points of $S$ luckily fall to these $k$ clusters separately, it is easy to obtain a $2$-approximation through triangle inequality, and the balanced clusters can be formed by the partition idea in Section~\ref{sec-feasible}.

Now, we consider the other case.  Suppose that $(s_{j_1},s_{j_2})$ is the firstly appeared pair belonging to the same optimal cluster and $j_1<j_2$. Without loss of generality, we assume that $s_j\in C_j$ for $1\leq j\leq j_2-1$. Due to the nature of Gonazlez's algorithm, we know that 
\begin{eqnarray}
\max_{p\in \cup^k_{j=j_2}C_j}\big\{\min_{1\leq l\leq j_2-1}||p-s_l||\big\}\leq ||s_{j_1}-s_{j_2}||\leq 2r_{opt}.\label{for-4app1}
\end{eqnarray}
Note that (\ref{for-4app1}) cannot directly imply a $2$-approximation of $k$-BCenter. For instance, we cannot simply assign each point to its nearest neighbor of $\{s_1,  \cdots,$ $ s_{j_2-1}\}$, because of the requirement of balance;
%
actually, this is also the major difference between the ordinary and balanced clustering problems. 
Instead, for each $j\geq j_2$, we arbitrarily select a point $p\in C_j$ and let its nearest neighbor of $\{s_1, \cdots,$ $ s_{j_2-1}\}$ be $s_{l(j)}$; then we assign it as the cluster center of  $C_j$. Correspondingly, for any $q\in C_j$ we have
\begin{eqnarray}
||q-s_{l(j)}||\leq ||q-p||+||p-s_{l(j)}||\leq 4r_{opt} \label{for-4app2}
\end{eqnarray}
due to triangle inequality and the fact that both $||q-p||$ and $||p-s_{l(j)}||$ are no larger than $2r_{opt}$. Thus, the $k$-tuple points $\{s_1,$ $ s_2,$ $ \cdots,$ $ s_{j_2-1},$ $ s_{l(j_2)},$ $ \cdots, s_{l(k)}\}$ yields a $4$-approximation if each optimal cluster $C_j$ takes the $j$-th point in the tuple as its cluster center. 
\qed
\end{proof}
\begin{remark}
In Appendix, we construct the examples to show that (1) the approximation factor $4$ is tight enough and (2) it is necessary to use $S^k$ rather than the simple $S$.
\end{remark}



Combining Lemma~\ref{lem-4app} and the partition idea in Section~\ref{sec-feasible}, we have Algorithm~\ref{alg-1} for $k$-BCenter. Step 1 and 2 take $O\big(knd+nk\log (nk)\big)$ time, and step 3 runs at most $O(k^k\log n)$ rounds with each round costing $O\big(n+k^2 2^{2k}\big)$ time. Thus, the total running time is $O\big(n(\log n +d)\big)$ if $k$ is a constant.

\begin{algorithm}[tb]
   \caption{$k$-BCenter}
   \label{alg-1}
\begin{algorithmic}
   \STATE {\bfseries Input:} $P=\{p_i, \mid 1\leq i\leq n\}\subset\mathbb{R}^d$, an integer $k\geq 1$, and the integer lower and upper bounds $1\leq L\leq U\leq n$.
\begin{enumerate}
\item Run Gonazlez's algorithm and output $k$ points $S=\{s_1, s_2, \cdots, s_k\}$. 
\item Compute the $nk$ distances from $P$ to $S$, and sort them in an increasing order. Denote by $\mathcal{R}$ the set of distances.
\item For each $k$-tuple $(s'_1, \cdots, s'_k)$ from $S^k$, binary search on $\mathcal{R}$. Initialize the optimal radius $r_{opt}=\max\mathcal{R}$. For each step with $r\in \mathcal{R}$, do the following steps.
\begin{enumerate}
\item Draw the $k$ balls with radii $r$ and centered at $(s'_1, \cdots, s'_k)$ separately. 
\item If the SoL (\ref{for-llp1})-(\ref{for-llp2}) is feasible, 
\begin{itemize}
\item if $r<r_{opt}$, update $r_{opt}$ to be $r$  and record the feasible solution; 
\item if $r$ is not a leaf, continue the binary search to the left side. Else, stop binary search.
\end{itemize}
\item Else, 
\begin{itemize}
\item if $r$ is not a leaf, continue the binary search to the right side. Else, stop binary search.
\end{itemize}
\end{enumerate}
\item Return the $k$-tuple from $S^k$ with the smallest $r_{opt}$ associating the corresponding feasible solution, and transform it to be an integral solution via Theorem~\ref{lem-integer}.
\end{enumerate}
\end{algorithmic}
\end{algorithm}

\begin{theorem}
\label{the-4app}
Algorithm~\ref{alg-1} yields a $4$-approximation of $k$-BCenter, and the running time is $O\big(n(\log n+d)\big)$ if $k$ is a constant. 
\end{theorem}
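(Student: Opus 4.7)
My plan is to separate the argument into a correctness part and a running-time part, both of which essentially stitch together the results already proved.

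For correctness, the key input is Lemma~\ref{lem-4app}: among the $k^k$ tuples of $S^k$, at least one tuple $(s'_1,\ldots,s'_k)$ induces a balanced clustering whose radius is at most $4r_{opt}$, where $r_{opt}$ is the optimum radius of $k$-BCenter. Since Algorithm~\ref{alg-1} exhaustively enumerates every tuple in $S^k$ and records the best resulting radius, it suffices to check that for this ``good'' tuple the algorithm indeed finds a radius $\leq 4r_{opt}$. For this I would note two facts. First, the optimum radius (for a fixed center tuple) is always realized by one of the $nk$ distances $\{\|p-s_j\|\mid p\in P,\,1\le j\le k\}$, so it lies in $\mathcal{R}$; hence the binary search over $\mathcal{R}$ is exhaustive. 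Second, by Lemma~\ref{lem-soltopartition}, the feasibility of the SoL (\ref{for-llp1})--(\ref{for-llp2}) is equivalent to the existence of a balanced partition of $P$ into balls of radius $r$ around the chosen centers, and Theorem~\ref{lem-integer} produces an actual integral partition of the desired clustering cost. Combining these, the binary search on the good tuple terminates with an $r\le 4r_{opt}$, which is what the overall minimum $r_{opt}$ over all tuples inherits.

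For the running time I would account for the three phases. Phase~(1) runs Gonazlez's algorithm in $O(knd)$ time. Phase~(2) computes the $nk$ distances in $O(nkd)$ time and sorts them in $O(nk\log(nk))$ time. Phase~(3) iterates over the $k^k$ tuples of $S^k$; for each tuple the binary search on $\mathcal{R}$ runs in $O(\log(nk))$ rounds. Each round must (a) count, for the current radius $r$, the populations $n_{(j_1,\ldots,j_t)}$ of the $2^k-1$ regions, which can be done in one pass over $P$ in $O(nk)$ time, and (b) solve the SoL by Theorem~\ref{lem-integer} in $O(k^2 2^{2k})$ time. Treating $k$ as a constant collapses the per-round cost to $O(n)$ and the tuple/round multiplier to $O(\log n)$, yielding $O(n\log n)$ for phase~(3). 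Adding the phases gives $O(knd) + O(nk\log(nk)) + O(n\log n) = O\bigl(n(\log n+d)\bigr)$ as claimed.

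The only mildly delicate point is ensuring that the binary search is actually complete, i.e., that the smallest feasible radius for any fixed tuple lies in $\mathcal{R}$. This is where I would be most careful: for each candidate center tuple, one must observe that as $r$ increases, feasibility of the SoL is monotone (larger balls can only add edges to the bipartite graph of Figure~\ref{fig-kbc_match}, never remove them), and that the feasibility status changes only at values of $r$ equal to some $\|p-s_j\|$; consequently the minimum feasible radius coincides with one of the pre-sorted values in $\mathcal{R}$, justifying the binary search. Everything else is a direct bookkeeping exercise on top of Lemma~\ref{lem-4app}, Lemma~\ref{lem-soltopartition}, and Theorem~\ref{lem-integer}.
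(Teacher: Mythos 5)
Your proposal is correct and follows essentially the same route as the paper: correctness via Lemma~\ref{lem-4app} together with the SoL feasibility machinery (Lemma~\ref{lem-soltopartition}, Theorem~\ref{lem-integer}), and the running time by the same per-phase accounting of $O(knd+nk\log(nk))$ plus $O(k^k\log n)$ binary-search rounds costing $O(n)$ each for constant $k$. Your extra observation that feasibility is monotone in $r$ and that the minimum feasible radius lies in $\mathcal{R}$ is a valid justification of a step the paper leaves implicit, not a different approach.
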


Since the $k$ cluster centers are always chosen from the input $P$ in our algorithm, the result can be directly extended to Metric $k$-BCenter.

\begin{corollary}
\label{cor-4app}
Given an instance of Metric $k$-BCenter, we suppose the time complexity for acquiring the distance between any two vertices is $O(D)$. Algorithm~\ref{alg-1} yields a $4$-approximation and the running time is $O\big(n(\log n+D)\big)$ if $k$ is a constant. 
\end{corollary}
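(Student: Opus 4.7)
The plan is to verify that Algorithm~\ref{alg-1} is entirely distance-based, so that replacing Euclidean distance computations with metric oracle queries preserves both the approximation guarantee and (up to the factor $D$) the running time. First I would observe that Gonzalez's seeding in Step~1 only requires pairwise distances between input points, and that the candidate cluster centers are always chosen from $S\subseteq P$, which is exactly what the Metric $k$-BCenter definition permits. Hence the algorithm is well-defined in the metric setting without any change.

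Next I would re-examine the correctness argument. The proof of Lemma~\ref{lem-4app} uses only the triangle inequality and the property that the $j$-th Gonzalez point maximizes the minimum distance to the previously chosen points; both hold verbatim in any metric space. The partition framework of Section~\ref{sec-feasible} is also metric-native: the regions $\mathcal{R}_{(j_1,\ldots,j_t)}$ are determined by which balls of radius $r$ centered at the candidate centers contain each point, and the SoL (\ref{for-llp1})--(\ref{for-llp2}) together with its rounding depend on the counts $n_{(j_1,\ldots,j_t)}$ rather than on any geometric coordinates. Thus Theorem~\ref{lem-integer} applies, and the overall $4$-approximation bound from Theorem~\ref{the-4app} carries over unchanged.

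For the running time, I would simply retrace the analysis of Algorithm~\ref{alg-1}, replacing each Euclidean distance evaluation (previously costed as $O(d)$) by a metric query of cost $O(D)$. Gonzalez's algorithm performs $O(kn)$ distance queries, contributing $O(knD)$. Sorting the $nk$ distances in Step~2 contributes $O(nk\log(nk))$. The binary search in Step~3 runs over the $O(k^k)$ tuples in $S^k$ times $O(\log n)$ levels, and each level classifies the $n$ input points into regions using already-computed distances and then solves the bounded-size SoL in $O(k^2 2^{2k})$ time; this totals $O(k^k\log n\cdot(n+k^22^{2k}))$. Treating $k$ as a constant, the dominant terms are $O(nD)$ from the distance queries and $O(n\log n)$ from sorting and binary search, giving the claimed $O(n(\log n+D))$ bound.

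The only subtle point, and thus the main thing I would double-check, is the step where we decide, for each $p\in P$ and candidate radius $r$, which of the $2^k-1$ regions $\mathcal{R}_{(j_1,\ldots,j_t)}$ contains it. I would argue that we pre-sort, for each $s_i\in S$, the distances $\{\|p-s_i\|:p\in P\}$ once at cost $O(n\log n)$ per $s_i$ (and $O(nD)$ distance evaluations amortized over all centers); then during binary search each membership test is $O(1)$ per point per center, so the per-round cost is $O(kn)$ and the total cost of Step~3 remains within the stated bound. With this accounting, Corollary~\ref{cor-4app} follows immediately from Theorem~\ref{the-4app}.
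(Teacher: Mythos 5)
Your proposal is correct and follows the same route as the paper, which justifies the corollary simply by noting that all candidate centers lie in $S\subseteq P$ and the algorithm only ever uses point-to-center distances, so the Euclidean argument and the running-time accounting carry over with each distance evaluation costed at $O(D)$. Your extra pre-sorting step for region membership is harmless but unnecessary, since comparing the $k$ already-stored distances of each point against the candidate radius $r$ already gives the $O(n+k^2 2^{2k})$ per-round cost used in Theorem~\ref{the-4app}.
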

\begin{remark}
Usually, a $n$-point metric is of size $\Theta(n^2)$ if all the ${n\choose 2}$ pairwise distances are given (i.e., $O(D)=O(1)$). Thus, our algorithm has sub-linear complexity in terms of the input size.
\end{remark}


\subsection{ $O(1)$-Approximation Algorithms for $k$-BMedian and $k$-BMeans}
\label{sec-ext-candidate}
%
%
First, we need to study that how to obtain the qualified $k$ cluster centers for $k$-BMedian and $k$-BMeans.
Similar to Lemma~\ref{lem-4app}, we also use the Cartesian product to bridge the ordinary and balanced clusterings.
%
%
%
%
%
%
The following lemma was originally discovered by Ding and Xu~\cite{DBLP:conf/soda/DingX15}, and we slightly modify the statement to make it more suitable for our problem.

\begin{lemma}[\cite{DBLP:conf/soda/DingX15}]
\label{lem-bridge1}
Let  $P$ be an instance of  $k$-BMedian (resp., $k$-BMeans). Suppose $\lambda\geq 1$, and $S=\{s_{1},\cdots,s_{k}\}$ is the set of the cluster centers yielding a $\lambda$-approximation of the ordinary $k$-median (resp., $k$-means) clustering on $P$ (without considering the balance). Then the Cartesian product $S^k$ contains at least one $k$-tuple that induces a $(3\lambda+2)$ (resp., $(18\lambda+16)$)-approximation of $k$-BMedian (resp., $k$-BMeans).
\end{lemma}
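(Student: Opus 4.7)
The plan is to exhibit one concrete tuple in $S^k$ that already meets the bound; since the framework of Section~\ref{sec-partition} computes the best balanced partition once the centers are fixed, it is enough to produce any feasible balanced assignment that attains the claimed cost. Let $C_1^{*},\ldots,C_k^{*}$ denote the optimal balanced clustering of $P$ with centers $c_1^{*},\ldots,c_k^{*}$ and optimal objective value $W_{\mathrm{opt}}$, and let $\sigma:P\to S$ be the nearest-neighbor map induced by $S$. Since the balanced optimum is itself a feasible ordinary solution, the optimal ordinary $k$-median cost $W_{\mathrm{ord}}$ satisfies $W_{\mathrm{ord}}\le W_{\mathrm{opt}}$, so the hypothesis on $S$ gives $\sum_{p\in P}\|p-\sigma(p)\|\le \lambda W_{\mathrm{opt}}$ (and analogously for squared distances in the $k$-means case).

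The candidate tuple I will use is $(s_{\pi(1)},\ldots,s_{\pi(k)})$ with
\[
s_{\pi(j)}:=\arg\min_{s\in S}\|s-c_j^{*}\|,\qquad j=1,\ldots,k,
\]
where repeats are allowed (this is legitimate because a $k$-tuple in $S^k$ need not have distinct entries, and the partition framework of Section~\ref{sec-partition} still handles repeated centers). The feasible partition I will compare against is the natural one: assign each $p\in C_j^{*}$ to $s_{\pi(j)}$; this inherits the balance of $C_1^{*},\ldots,C_k^{*}$, so the LP-based best balanced partition with centers $(s_{\pi(1)},\ldots,s_{\pi(k)})$ has cost no larger.

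For the $k$-BMedian bound I will chain two triangle inequalities with the defining property of $s_{\pi(j)}$. For every $p\in C_j^{*}$,
\[
\|p-s_{\pi(j)}\|\le \|p-\sigma(p)\|+\|\sigma(p)-c_j^{*}\|+\|c_j^{*}-s_{\pi(j)}\|.
\]
Because $\sigma(p)\in S$ and $s_{\pi(j)}$ is the nearest element of $S$ to $c_j^{*}$, we have $\|c_j^{*}-s_{\pi(j)}\|\le \|c_j^{*}-\sigma(p)\|\le \|p-\sigma(p)\|+\|p-c_j^{*}\|$. Substituting and collecting terms yields
\[
\|p-s_{\pi(j)}\|\le 3\,\|p-\sigma(p)\|+2\,\|p-c_j^{*}\|.
\]
Summing over $p$ and $j$ gives $3\lambda W_{\mathrm{ord}}+2W_{\mathrm{opt}}\le (3\lambda+2)W_{\mathrm{opt}}$, which is the desired $k$-BMedian guarantee.

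For $k$-BMeans the same chain is used, but each triangle inequality is squared via $(a+b)^2\le 2a^2+2b^2$. This converts $\|c_j^{*}-s_{\pi(j)}\|^2\le \|c_j^{*}-\sigma(p)\|^2\le 2\|p-\sigma(p)\|^2+2\|p-c_j^{*}\|^2$ into $\|\sigma(p)-s_{\pi(j)}\|^2\le 4\|\sigma(p)-c_j^{*}\|^2\le 8\|p-\sigma(p)\|^2+8\|p-c_j^{*}\|^2$, and a final doubling gives $\|p-s_{\pi(j)}\|^2\le 18\|p-\sigma(p)\|^2+16\|p-c_j^{*}\|^2$; summation and $W_{\mathrm{ord}}\le W_{\mathrm{opt}}$ deliver the $(18\lambda+16)$ factor. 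The main obstacle is the choice of $\pi$: picking $s_{\pi(j)}$ to be nearest to $c_j^{*}$ (rather than, say, a majority vote among $\sigma$-labels of $C_j^{*}$) is precisely what enables the bound $\|c_j^{*}-s_{\pi(j)}\|\le \|c_j^{*}-\sigma(p)\|$ for \emph{every} $p\in C_j^{*}$, so that the extra detour through $\sigma(p)$ can be amortised against the ordinary-clustering cost rather than against $|C_j^{*}|$; without this property the constant in front of $\lambda$ blows up by a factor depending on the cluster sizes.
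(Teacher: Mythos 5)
Your proof is correct. Note that the paper does not prove Lemma~\ref{lem-bridge1} itself; it imports the statement from~\cite{DBLP:conf/soda/DingX15}, so there is no in-paper argument to compare against. Your self-contained derivation is sound: choosing for each optimal balanced cluster $C_j^{*}$ the center $s_{\pi(j)}\in S$ nearest to $c_j^{*}$ (repetitions being legitimate since $S^k$ is a Cartesian product), reusing the optimal balanced assignment so that the size constraints are inherited, and chaining triangle inequalities through $\sigma(p)$ and $c_j^{*}$ together with $W_{\mathrm{ord}}\le W_{\mathrm{opt}}$ reproduces exactly the stated factors $3\lambda+2$ and $18\lambda+16$ (the squared-distance bookkeeping $2\cdot 1+2\cdot 8=18$ and $2\cdot 8=16$ checks out). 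One minor remark: with your very choice of tuple, the shorter chain $\|p-s_{\pi(j)}\|\le\|p-c_j^{*}\|+\|c_j^{*}-s_{\pi(j)}\|\le 2\|p-c_j^{*}\|+\|p-\sigma(p)\|$ already yields the stronger factor $\lambda+2$ for $k$-BMedian (and correspondingly smaller constants for $k$-BMeans); your detour through $\sigma(p)$ before reaching $c_j^{*}$ is what produces the looser, but still sufficient, bound matching the lemma as stated.
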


Actually, a number of approximation algorithms for the ordinary $k$-median and $k$-means clusterings have been studied before. In particular, Indyk~\cite{indyk1999sublinear} provided a nearly linear time bi-criteria approximation algorithm for metric $k$-median clustering, where it outputs $O(k)$ cluster centers and yields a clustering cost within a constant factor of the optimal cost. Further, Chen~\cite{chen2009coresets} improved the running time to be linear $O(knd)$ and show that the algorithm can handle both $k$-median and $k$-means clusterings in Euclidean or abstract metric space. Their algorithms use a slow bi-criteria approximation algorithm (e.g., \cite{jain2001approximation,charikar2005improved}) as the black-box to deal with small random samples. Thus the exact approximation factor depends on which black-box algorithm they use; the factor also depends on the type of the given clustering problem, such as (Metric) $k$-BMedian or (Metric) $k$-BMeans. For simplicity, we just denote the approximation factor as the unified $O(1)$. We refer the reader to~\cite{indyk1999sublinear,chen2009coresets} for more details. Further, the following Lemma~\ref{lem-bridge2} shows that we can obtain the $O(1)$-approximations for $k$-BMedian and $k$-BMeans through the bi-criteria approximations.


\begin{lemma}
\label{lem-bridge2}
Let $P$ be an instance of  $k$-BMedian (resp., $k$-BMeans). We first ignore the requirement of balance and run the algorithm~\cite{chen2009coresets} on $P$ to obtain the set $C$ of $O(k)$ cluster centers. Then, the Cartesian product $C^k$ contains at least one $k$-tuple yielding an $O(1)$-approximation for $k$-BMedian (resp., $k$-BMeans) on $P$.
\end{lemma}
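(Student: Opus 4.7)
The plan is to adapt the proof of Lemma~\ref{lem-bridge1} almost verbatim, exploiting the observation that the original argument never actually used $|S|=k$; it only used that $S$ induces an ordinary clustering whose total cost is at most $\lambda$ times the ordinary optimum. Since $\mathrm{OPT}_{\text{ord}}\leq \mathrm{OPT}_{\text{bal}}$ (ordinary clustering is a relaxation of the balanced one), the bi-criteria guarantee of \cite{chen2009coresets} gives $\sum_{p\in P}d(p,C)\leq \lambda\,\mathrm{OPT}_{\text{bal}}$ for a constant $\lambda=O(1)$, which is exactly the structural ingredient that drove Lemma~\ref{lem-bridge1}. The extra centers in $C$ only make matters easier, because we may reuse the same element of $C$ in several coordinates of the Cartesian product.

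Concretely, I would first fix the optimal balanced clusters $C_1^*,\ldots,C_k^*$ with centers $c_1^*,\ldots,c_k^*$ and per-cluster $k$-BMedian costs $w_j=\sum_{p\in C_j^*}\|p-c_j^*\|$, so $\sum_j w_j=W^*=\mathrm{OPT}_{\text{bal}}$. Set also $R_j=\sum_{p\in C_j^*}d(p,C)$, so that $\sum_j R_j\leq \lambda W^*$. The next step is a two-sided Markov argument inside each $C_j^*$: fewer than $|C_j^*|/2$ points can violate $\|p-c_j^*\|\leq 2w_j/|C_j^*|$, and fewer than $|C_j^*|/2$ can violate $d(p,C)\leq 2R_j/|C_j^*|$, so some point $p_j\in C_j^*$ satisfies both. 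Let $c_j\in C$ be a nearest center to $p_j$; by triangle inequality for every $q\in C_j^*$,
\begin{equation*}
\|q-c_j\|\leq \|q-c_j^*\|+\|c_j^*-p_j\|+\|p_j-c_j\|\leq \|q-c_j^*\|+\tfrac{2w_j}{|C_j^*|}+\tfrac{2R_j}{|C_j^*|}.
\end{equation*}
Summing over $q\in C_j^*$ gives $\sum_{q\in C_j^*}\|q-c_j\|\leq 3w_j+2R_j$, and summing over $j$ yields $(3+2\lambda)W^*$. Hence the $k$-tuple $(c_1,\ldots,c_k)\in C^k$ attains an $O(1)$-approximation for $k$-BMedian.

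For $k$-BMeans, the same plan goes through after replacing the triangle inequality with its relaxed quadratic version $\|a-b\|^2\leq 2\|a-c\|^2+2\|c-b\|^2$, applied twice to bound $\|q-c_j\|^2$ in terms of $\|q-c_j^*\|^2$, $\|c_j^*-p_j\|^2$, and $\|p_j-c_j\|^2$. The Markov step is identical up to the squared thresholds $4w_j/|C_j^*|$ and $4R_j/|C_j^*|$ and the resulting constant is larger but still absolute, giving an $O(1)$-factor approximation for $k$-BMeans.

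The only delicate point, which I expect to be the principal obstacle, is making the two-sided Markov argument yield a single representative $p_j$ simultaneously witnessing small distance to $c_j^*$ and small distance to $C$; this is why the thresholds must be chosen so that the two "bad" sets each contain strictly fewer than $|C_j^*|/2$ points. Once that is secured, the rest is bookkeeping; in particular the argument never requires $|C|=k$, so the bi-criteria enlargement from \cite{chen2009coresets} is absorbed harmlessly into the constant.
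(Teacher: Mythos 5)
Your proposal is correct, but it reaches the conclusion by a genuinely different route than the paper. The paper never re-proves the Cartesian-product argument: it treats $C$ as a weighted $k$-median instance, considers (as a proof device only) the best $k$-subset $\tilde{C}\subset C$, argues via two applications of the triangle inequality (respectively the relaxed inequality $\|a+b\|^2\le 2\|a\|^2+2\|b\|^2$ for means) that $\tilde{C}$ is an $O(1)$-approximation for the \emph{ordinary} $k$-median/$k$-means on $P$, and then invokes Lemma~\ref{lem-bridge1} of~\cite{DBLP:conf/soda/DingX15} as a black box on $\tilde{C}$, concluding because $\tilde{C}^k\subset C^k$. You instead redo the structural argument directly on the bi-criteria set $C$: fixing the optimal balanced clusters $C_j^*$, a two-sided averaging (Markov) step inside each $C_j^*$ produces a representative $p_j$ that is simultaneously within twice the average distance of $c_j^*$ and of $C$, and snapping $p_j$ to its nearest center in $C$ gives, by one triangle inequality, a tuple in $C^k$ of cost at most $(3+2\lambda)\,\mathrm{OPT}_{\text{bal}}$; the key observation that the argument only needs $\sum_{p\in P}d(p,C)\le\lambda\,\mathrm{OPT}_{\text{ord}}\le\lambda\,\mathrm{OPT}_{\text{bal}}$, never $|C|=k$, is sound, and your handling of the strict-majority issue in the Markov step is the right fix. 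What each approach buys: the paper's reduction is shorter on new proof work since it leans on the cited lemma, at the cost of an extra intermediate object and a looser chain of constants; yours is self-contained, gives an explicit and somewhat better constant, and makes transparent why the $O(k)$ bi-criteria enlargement is harmless (repeated coordinates in $C^k$ are allowed). One minor caveat, which does not affect the $O(1)$ claim: the guarantee of~\cite{chen2009coresets} may be stated against the optimum with centers restricted to $P$, so comparing against arbitrary Euclidean centers costs another constant factor (2 for median, 4 for means), exactly the discrete-versus-continuous issue the paper absorbs with its ``$\tilde{C}$ yields a $2$-approximation on $C$'' step; you implicitly fold this into $\lambda$, which is fine but worth stating.
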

\begin{proof}
We focus on $k$-median clustering first. In fact, the set $C$ can be viewed as a new instance of the $k$-median clustering, where each point of $C$ is a multi-set of points with the size equal to the size of the corresponding cluster in the bi-criteria approximation. We restrict the cluster centers to be selected from $C$, and let $\tilde{C}\subset C$ be the best selection, i.e., the $k$-tuple yielding the smallest clustering cost among all the ${|C|\choose k}$ choices. Note that the optimal cluster centers in fact do not necessarily come from $C$ in Euclidean space. So $\tilde{C}$ yields a $2$-approximation on $C$ by triangle inequality. Also, we know that the set $C$ yields a clustering cost within a constant factor of the optimal cost on $P$. Using triangle inequality again, we know that $\tilde{C}$ is an $O(1)$-approximation of the $k$-median clustering on $P$. The similar result can be easily extended to $k$-means clustering, where the only difference is that the triangle inequality is replaced by the weaker one, i.e., $||a+b||^2\leq 2||a||^2+2||b||^2$ for any vectors $a$ and $b$.

Through Lemma~\ref{lem-bridge1}, we know that $\tilde{C}^k$ contains at least one $k$-tuple yielding an $O(1)$-approximation for $k$-BMedian (resp., $k$-BMeans) on $P$. Since $\tilde{C}^k\subset C^k$, Lemma~\ref{lem-bridge2} is true.
\qed
\end{proof}


\begin{algorithm}[tb]
   \caption{$O(1)$-Approximation Algorithm for  $k$-BMedian/$k$-BMeans}
   \label{alg-2}
\begin{algorithmic}
   \STATE {\bfseries Input:} $P=\{p_i, \mid 1\leq i\leq n\}\subset\mathbb{R}^d$, an integer $k\geq 1$, small $\epsilon>0$, and the integer lower and upper bounds $1\leq L\leq U\leq n$.
\begin{enumerate}
\item Run Chen's bi-criteria $k$-median/$k$-means algorithm~\cite{chen2009coresets}, and output the set $C$ of $O(k)$ cluster centers. 
\item Compute the $O(kn)$ distances from $P$ to $C$.
\item For each $k$-tuple $(s'_1, \cdots, s'_k)$ from $C^k$, solve the LP (\ref{for-lp1})-(\ref{for-lp3}).
\item Return the $k$-tuple having the smallest objective value and transform the LP solution to an integral solution via Theorem~\ref{lem-ext-integer}.
\end{enumerate}
\end{algorithmic}
\end{algorithm}

We present the algorithm for $k$-BMedian and $k$-BMeans in Algorithm~\ref{alg-2}. 
The time complexity of Step 1 and 2 of Algorithm~\ref{alg-2} is $O(knd)$. Using Theorem~\ref{lem-ext-integer}, we know that Step 3 and 4 takes $O(k^k)\times O(n+k^4 L^{3k}\log^2 L)=O(k^k n+k^{k+4}L^{3k}\log^2 L)$ time. 
%
%
%
%
For ease of presentation, we bound $r_{max}/r_{min}$ by the spread ratio $\Delta$, the ratio of the largest to smallest pairwise distance among $P$. Therefore, $L=O(\frac{\log \Delta}{\epsilon})$ (note that we do not need to really compute $\Delta$, since knowing $r_{max}$ and $r_{min}$ is already sufficient for Algorithm~\ref{alg-2}). Note that spread ratio is commonly used as a parameter of the time complexities in many geometric algorithms, and its logarithmic (i.e., $\log\Delta$) is usually not large~\cite{edelsbrunner1997cutting,indyk1999geometric}.

The resulting approximation factor of Algorithm~\ref{alg-2} depends on Lemma~\ref{lem-bridge1} and~\ref{lem-bridge2}. In addition, we slightly increase the approximation factor by a factor $1+\epsilon$ due to Lemma~\ref{lem-lp1}. Actually, we can set $\epsilon=1$ if only an $O(1)$-approximation is enough. In total, we have the following theorem.

\begin{theorem}
\label{the-ext1}
Algorithm~\ref{alg-2} yields an $O(1)$-approximation for $k$-BMedian/$k$-BMeans in $O\big(nd+(\log \Delta)^{3k}\log^2(\log\Delta)\big)$ time, if $k$ is a constant.
\end{theorem}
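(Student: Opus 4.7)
The plan is to verify the two claims of the theorem, namely the $O(1)$ approximation factor and the running time, independently.

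For the approximation factor, I would chain together three ingredients that are already established in the excerpt. First, Chen's bi-criteria algorithm~\cite{chen2009coresets} invoked in Step~1 returns a set $C$ of $O(k)$ cluster centers whose induced unconstrained $k$-median (resp.\ $k$-means) cost on $P$ is within a constant factor of the optimum. Second, Lemma~\ref{lem-bridge2} shows that the Cartesian product $C^k$ then contains at least one $k$-tuple $(s'_1,\dots,s'_k)$ whose induced balanced clustering cost is within an $O(1)$ factor of the optimum of the balanced problem. Third, for this particular tuple Lemma~\ref{lem-lp1} (and its remark for $k$-BMeans) gives that the LP value in (\ref{for-lp1})--(\ref{for-lp3}) is within a factor $1+\epsilon$ of the balanced clustering cost induced by the centers, and Theorem~\ref{lem-ext-integer} produces an integral solution achieving that LP value, which by Lemma~\ref{lem-lp2}(i) corresponds to an actual balanced partition of $P$. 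Setting $\epsilon = 1$ and noting that Step~4 picks the best tuple over all of $C^k$, the composition yields an $O(1)$-approximation.

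For the running time, I would account for the cost of each step under the assumption that $k$ is a constant. Steps~1 and~2 cost $O(knd)$ by the complexity of Chen's algorithm and by direct distance computation. Step~3 iterates over the $|C^k| = O(k^k)$ tuples; for each tuple I need to (a)~determine the region counts $n_{(l_1,\dots,l_k)}$, which is $O(n)$ given the precomputed distances since each point is classified by $k$ annulus comparisons, and (b)~solve the LP, which by Theorem~\ref{lem-ext-integer} costs $O(k^4 L^{3k}\log^2 L)$. Substituting the bound $L = O((\log\Delta)/\epsilon) = O(\log\Delta)$ stated in the paragraph preceding the theorem, and collapsing constant-in-$k$ factors, the total is $O(nd) + O(k^k)\cdot O(n + k^4 (\log\Delta)^{3k}\log^2(\log\Delta)) = O\bigl(nd + (\log\Delta)^{3k}\log^2(\log\Delta)\bigr)$.

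The only mildly delicate points are bookkeeping ones rather than genuine obstacles. I would make sure to observe that $r_{\max}$ and $r_{\min}$, which determine $L$, are read off the distance set computed in Step~2 and therefore require no additional work, and that the composition of the constants from Lemma~\ref{lem-bridge2} and the $(1+\epsilon)$ discretization error of the LP collapses to a single $O(1)$ factor (this is why setting $\epsilon = 1$ is harmless). With these remarks, the theorem follows immediately by combining the approximation and running-time analyses above.
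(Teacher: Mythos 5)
Your proposal is correct and follows essentially the same route as the paper: Chen's bi-criteria centers plus Lemma~\ref{lem-bridge2} give a good $k$-tuple in $C^k$, Lemma~\ref{lem-lp1} with $\epsilon=1$ and Theorem~\ref{lem-ext-integer} handle the partition with only a constant-factor loss, and the running-time accounting ($O(knd)$ for Steps 1--2, $O(k^k)\cdot O(n+k^4L^{3k}\log^2 L)$ for Steps 3--4 with $L=O(\log\Delta)$) matches the paper's. No gaps worth noting.
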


Since Chen's algorithm~\cite{chen2009coresets} and our Algorithm~\ref{alg-2} work for any metric space, we have the following corollary. 
\begin{corollary}
\label{cor-ext}
Given an instance of Metric $k$-BMedian/$k$-BMeans, we suppose the time complexity for acquiring the distance between any two vertices is $O(D)$. Algorithm~\ref{alg-2} yields an $O(1)$-approximation  in $O\big(nD+(\log \Delta)^{3k}\log^2(\log\Delta)\big)$ time, if $k$ is a constant.
\end{corollary}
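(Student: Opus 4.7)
The plan is to verify that every ingredient of Algorithm~\ref{alg-2} and its analysis carries over from the Euclidean setting of Theorem~\ref{the-ext1} to an abstract metric, and then to re-account for the cost of a single distance query, which is now $O(D)$ instead of $O(d)$.

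First, I would check Step 1. The bi-criteria algorithm of Chen~\cite{chen2009coresets} is already stated to handle $k$-median and $k$-means clusterings in an abstract metric space and to output a set $C\subseteq P$ of $O(k)$ cluster centers. Its analysis only needs to query pairwise distances, so substituting $O(D)$ for $O(d)$ in its running time yields $O(knD)$ in the metric case. Step 2 likewise only computes $O(kn)$ distances and therefore costs $O(knD)$.

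Second, I would argue that Lemma~\ref{lem-bridge2} still applies. Its proof uses only triangle inequality for $k$-median and the weak triangle inequality $||a+b||^{2}\leq 2||a||^{2}+2||b||^{2}$ for $k$-means; both of these have direct metric analogues (the metric triangle inequality and its squared version $d(x,z)^{2}\le 2d(x,y)^{2}+2d(y,z)^{2}$). Moreover, Definition~\ref{def-problem2} requires the cluster centers to come from the input $P$, and since $C\subseteq P$ and $\tilde{C}\subseteq C$, the resulting $k$-tuple in $\tilde{C}^{k}\subseteq C^{k}$ is a legal choice of centers for Metric $k$-BMedian/$k$-BMeans. Combined with Lemma~\ref{lem-bridge1} (whose proof is purely triangle-inequality-based and is also valid in metric), we conclude that $C^{k}$ contains a $k$-tuple yielding an $O(1)$-approximation.

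Third, I would observe that the spatial partition of Section~\ref{sec-ext-select} and the LP (\ref{for-lp1})--(\ref{for-lp3}) never use Euclidean structure beyond measuring distances from the points of $P$ to the $k$ fixed centers. Constructing the regions $\mathcal{R}_{(l_{1},\dots,l_{k})}$, counting $n_{(l_{1},\dots,l_{k})}$, and setting up the LP all reduce to distance queries between $P$ and the candidate centers, which contributes $O(nkD)$ per $k$-tuple but is dominated after we fold $k$ into a constant. Lemmas~\ref{lem-lp2} and~\ref{lem-lp1} are stated purely in terms of distances, hence remain true, and the integral rounding argument of Theorem~\ref{lem-ext-integer} is combinatorial on the bipartite graph and is independent of the host space.

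Finally, I would add up the costs exactly as in the proof of Theorem~\ref{the-ext1}: Steps 1--2 take $O(knD)$; Step 3 enumerates $O(k^{k})$ tuples from $C^{k}$ and, for each, constructs the regions in $O(nkD)$ time and solves the LP in $O(k^{4}L^{3k}\log^{2}L)$ by Theorem~\ref{lem-ext-integer} where $L=O(\log\Delta/\epsilon)$; setting $\epsilon=1$ and treating $k$ as a constant gives a total of $O\bigl(nD+(\log\Delta)^{3k}\log^{2}(\log\Delta)\bigr)$. The only real subtlety is ensuring that Lemma~\ref{lem-bridge2}'s proof, which is phrased for Euclidean space, transfers cleanly; that is where I would spend the most care, essentially just replacing $||\cdot||$ by the metric $d(\cdot,\cdot)$ and reusing the identical triangle-inequality calculation.
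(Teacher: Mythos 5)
Your proposal is correct and follows essentially the same route as the paper, which justifies the corollary in a single remark: Chen's bi-criteria algorithm and Algorithm~\ref{alg-2} only ever use distance queries, so everything carries over to an abstract metric with each query costing $O(D)$ instead of $O(d)$. Your more detailed check that Lemmas~\ref{lem-bridge1} and~\ref{lem-bridge2} rely only on (weak) triangle inequalities and that the LP and rounding of Theorem~\ref{lem-ext-integer} are purely combinatorial is a faithful, somewhat more explicit version of the paper's argument.
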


\subsection{ $(1+\epsilon)$-Approximation Algorithms for $k$-BMedian and $k$-BMeans}
\label{sec-ext-ptas}
As mentioned in Section~\ref{sec-our}, we can use the ideas of~\cite{DBLP:conf/soda/DingX15,bhattacharya2018faster} to generate a set of candidates for the $k$ cluster centers; at least one $k$-tuple yields a $(1+\epsilon)$-approximation for $k$-BMedian/$k$-BMeans.


\begin{lemma}[\cite{bhattacharya2018faster}]
\label{lem-bridge3}
Let  $P$ be an instance of  $k$-BMedian (resp., $k$-BMeans). There exists an algorithm generating a set of $k$-tuples for $k$-BMedian (resp., $k$-BMeans) in $O(2^{\tilde{O}(k/\epsilon^{O(1)})}nd)$ (resp., $O(2^{\tilde{O}(k/\epsilon)}nd)$) time, such that with constant probability, at least one $k$-tuple yields a $(1+\epsilon)$-approximation. The size of the $k$-tuple set is $O(2^{\tilde{O}(k/\epsilon^{O(1)})})$ (resp., $O(2^{\tilde{O}(k/\epsilon)})$). The notation $\tilde{O}$ hides the logarithm factors $\log k$ and $\log (\frac{1}{\epsilon})$.
\end{lemma}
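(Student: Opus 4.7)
The plan is to follow the sampling–and–enumeration framework of Bhattacharya et al.\ (which itself refines the earlier Ding–Xu method), since the statement is literally imported from their paper. The core idea is that for ordinary $k$-means (and $k$-median), once the $k$ cluster centers are good approximations to the optimal ones, any assignment phase can be handled separately; and good centers can be obtained by a structural ``superset sampling'' result: a uniformly random sample of size $O(1/\epsilon)$ drawn from a single optimal cluster, when aggregated (mean for $k$-means, $1$-median for $k$-median), yields a $(1+\epsilon)$-approximate center of that cluster with constant probability (the Inaba et al.\ lemma for $k$-means; the analogous result via Kumar–Sabharwal–Sen for $k$-median). So if we could somehow obtain an $O(1/\epsilon)$-size sample that is guaranteed to lie inside one optimal cluster, we could read off a candidate center for that cluster.

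Next I would describe the iterative ``peeling'' procedure. At each round $j=1,\dots,k$, the current partial tuple $(c_1,\dots,c_{j-1})$ determines residual costs, and the algorithm draws a single batch of $O(k/\epsilon)$ points using $D^2$-sampling (i.e.\ probability proportional to squared distance to the current centers for $k$-means, and a suitable non-squared variant for $k$-median). The key probabilistic claim is that the optimal cluster contributing the largest share of residual cost is hit by $\Omega(1/\epsilon)$ of these samples with constant probability, and conditioned on which points are hit, those points look like a uniform random sample from that cluster. Because the algorithm cannot tell which samples landed in the relevant cluster, it simply branches over all ${O(k/\epsilon) \choose O(1/\epsilon)}$ size-$O(1/\epsilon)$ subsets of the batch, computes the candidate center from each subset, and recurses into the $k$-th round. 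This yields a tree of depth $k$ whose leaves enumerate candidate $k$-tuples, with at least one leaf producing a $(1+\epsilon)$-approximation with constant probability.

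For the running time and tuple-set size, I would count the branching factor per level as $2^{\tilde{O}(1/\epsilon)}$ for $k$-means (or $2^{\tilde{O}(1/\epsilon^{O(1)})}$ for $k$-median, which requires the slightly heavier polynomial dependence on $1/\epsilon$ in the sampling bound), giving $2^{\tilde{O}(k/\epsilon)}$ (resp.\ $2^{\tilde{O}(k/\epsilon^{O(1)})}$) leaves after $k$ rounds. Each round requires one pass over $P$ to refresh the $D^2$-sampling distribution, contributing an extra $nd$ factor, and the tilde hides $\log k$ and $\log(1/\epsilon)$ from both sampling concentration and the success-amplification step. Multiplying gives the stated time bounds, and the output $k$-tuple set has the stated size.

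The main obstacle will be the probabilistic analysis of the $D^2$-sampling step, specifically showing that even after several rounds the cluster with the largest residual contribution still receives $\Omega(1/k)$ of the sampling mass, and that conditioning on which samples hit that cluster does not bias their distribution away from uniform enough to spoil the Inaba-style guarantee. In the $k$-median setting this is more delicate because the sampling probabilities are not squared distances, so the potential-function argument used for $k$-means must be replaced by the weaker triangle-inequality-based argument, which is exactly where the extra $\epsilon^{O(1)}$ factor in the exponent appears. Once that analysis is in place, however, the structure of the algorithm and the final accounting are straightforward consequences of the framework.
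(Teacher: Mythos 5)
The paper does not prove this lemma at all: it is imported verbatim as a citation to Bhattacharya et al.~\cite{bhattacharya2018faster}, so there is no in-paper argument to compare against. Your sketch ($D^2$-sampling per round, branching over all $O(1/\epsilon)$-size subsets of each sampled batch, Inaba-type aggregation for $k$-means and the heavier $\epsilon^{O(1)}$-dependent analogue for $k$-median, giving a depth-$k$ enumeration tree with $2^{\tilde{O}(k/\epsilon)}$ resp.\ $2^{\tilde{O}(k/\epsilon^{O(1)})}$ leaves and one pass over $P$ per round) is a faithful reconstruction of the approach actually used in that cited work, so it is consistent with how the statement is established there.
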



According to Lemma~\ref{lem-bridge3}, we can modify Step 1 and 2 of Algorithm~\ref{alg-2}, and improve the approximation factor to be $1+\epsilon$. 
First, for any of the $O(2^{\tilde{O}(k/\epsilon^{O(1)})})$ ($O(2^{\tilde{O}(k/\epsilon)})$) $k$-tuples, we need to compute the $kn$ pairwise distances to $P$. Second, the ratio $r_{max}/r_{min}$ is not bounded by the spread ratio $\Delta$. Actually, the algorithm of~\cite{bhattacharya2018faster} consists of a sequence of carefully designed sampling procedures, and each candidate cluster center is the mean of a multi-set sample of $O(\frac{k}{\epsilon^3})$ points from $P$. It implies that $r_{max}/r_{min}\leq O(\frac{k}{\epsilon^3}\Delta)$. 

\begin{theorem}
\label{the-ext2}
If we use the method of~\cite{bhattacharya2018faster} to generate the $k$-tuples as the candidates of cluster centers, Algorithm~\ref{alg-2} yields a $(1+\epsilon)$-approximation for $k$-BMedian (resp., $k$-BMeans) in $O\Big(2^{\tilde{O}(k/\epsilon^{O(1)})}\big(nd+ (\frac{\log (\Delta/\epsilon)}{\epsilon})^{3k}(\log\frac{1}{\epsilon}+\log\log\Delta)^2\big)\Big)$ (resp., $O\Big(2^{\tilde{O}(k/\epsilon)}\big(nd+ (\frac{\log (\Delta/\epsilon)}{\epsilon})^{3k}$ $(\log\frac{1}{\epsilon}+\log\log\Delta)^2\big)\Big)$)
time, if $k$ is a constant.
\end{theorem}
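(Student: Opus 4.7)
The plan is to follow the same scaffold as Algorithm~\ref{alg-2} and Theorem~\ref{the-ext1}, but replace the candidate-generation step (Chen's bi-criteria algorithm) by the stronger procedure from Lemma~\ref{lem-bridge3}. First, I would invoke the algorithm of~\cite{bhattacharya2018faster} to produce a family $\mathcal{F}$ of candidate $k$-tuples of centers; by Lemma~\ref{lem-bridge3}, with constant probability at least one $(s'_1,\ldots,s'_k)\in\mathcal{F}$ is a $(1+\epsilon)$-approximation for the unconstrained $k$-BMedian/$k$-BMeans objective, and $|\mathcal{F}|$ and the generation time are both bounded by the quantities stated in Lemma~\ref{lem-bridge3}. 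Then, for each candidate in $\mathcal{F}$ I would compute the $kn$ distances from $P$ to the tuple in $O(knd)=O(nd)$ time, construct the spatial partition and the LP (\ref{for-lp1})–(\ref{for-lp3}) of Section~\ref{sec-ext-select}, and invoke Theorem~\ref{lem-ext-integer} to obtain an integral optimal LP solution. Finally, I would return the candidate whose integral LP value is minimum and convert that solution into the corresponding balanced partition via Lemma~\ref{lem-lp2}(\textbf{i}).

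For the approximation guarantee I would chain two inequalities: on the ``good'' candidate given by Lemma~\ref{lem-bridge3}, the optimal unconstrained cost is already within a $(1+\epsilon)$ factor of the optimal balanced cost, while Lemma~\ref{lem-lp1} (and its $k$-BMeans variant noted in the remark after Lemma~\ref{lem-lp1}) shows that the LP objective is within a further $(1+\epsilon)$ factor of the best balanced clustering induced by those fixed centers. Composing the two losses gives $(1+\epsilon)^2=1+O(\epsilon)$, which can be absorbed by rescaling $\epsilon$ at the outset.

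The main technical step — and the piece I expect to be the obstacle — is bounding the discretization parameter $L=\log_{1+\epsilon}(r_{max}/r_{min})$ in terms of the spread ratio $\Delta$, since unlike Algorithm~\ref{alg-2} where centers come directly from $P$, the candidates here are means of multisets of at most $O(k/\epsilon^3)$ input points (as produced by the sampling procedure of~\cite{bhattacharya2018faster}). I would argue that every such center lies in the convex hull of $P$, so any distance from a point in $P$ to a candidate center is at most a constant times the diameter of $P$, while it is no smaller than the minimum pairwise distance divided by the sample size $O(k/\epsilon^3)$. This yields $r_{max}/r_{min}=O(\tfrac{k}{\epsilon^3}\Delta)$, and hence $L=O\!\left(\tfrac{\log(\Delta/\epsilon)}{\epsilon}\right)$ when $k$ is constant.

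Plugging this bound into Theorem~\ref{lem-ext-integer}, the cost of processing one candidate is $O\!\left(nd+k^4 L^{3k}\log^2 L\right)=O\!\left(nd+\bigl(\tfrac{\log(\Delta/\epsilon)}{\epsilon}\bigr)^{3k}(\log\tfrac{1}{\epsilon}+\log\log\Delta)^2\right)$. Multiplying by $|\mathcal{F}|=O(2^{\tilde O(k/\epsilon^{O(1)})})$ for $k$-BMedian (respectively $O(2^{\tilde O(k/\epsilon)})$ for $k$-BMeans) and adding the generation time from Lemma~\ref{lem-bridge3} yields exactly the total running time claimed in the theorem statement. Correctness follows because at least one candidate in $\mathcal{F}$ achieves the $(1+\epsilon)$ guarantee after the LP rounding above, and we simply return the best tuple found.
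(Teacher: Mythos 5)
Your overall route is the same as the paper's: swap Chen's bi-criteria candidates for the candidate set of Lemma~\ref{lem-bridge3}, run the LP of (\ref{for-lp1})--(\ref{for-lp3}) with Theorem~\ref{lem-ext-integer} on each $k$-tuple, bound $r_{max}/r_{min}$ by $O(\frac{k}{\epsilon^3}\Delta)$ using the fact that each candidate center is a mean of a multiset of $O(\frac{k}{\epsilon^3})$ input points, and multiply the per-candidate cost $O\big(nd+L^{3k}\log^2 L\big)$ with $L=O(\frac{\log(\Delta/\epsilon)}{\epsilon})$ by the candidate-set size. However, the way you state the approximation chain is off in a way that, taken literally, breaks it. You describe the guarantee of Lemma~\ref{lem-bridge3} as being for the ``unconstrained'' objective and then chain ``the optimal unconstrained cost is within $(1+\epsilon)$ of the optimal balanced cost.'' That inequality is trivially true (the unconstrained optimum is a relaxation) and does not combine with Lemma~\ref{lem-lp1}, which bounds the LP value against the \emph{best balanced} clustering induced by the fixed centers. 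What you need --- and what Lemma~\ref{lem-bridge3} actually asserts, since $k$-BMedian/$k$-BMeans are by definition the balanced problems --- is that for the good tuple the minimum balanced cost induced by those centers is at most $(1+\epsilon)$ times the balanced optimum; this is precisely the point of the Ding--Xu/Bhattacharya framework, whose candidates are generated obliviously to the constraint. If the candidates were only guaranteed good for ordinary $k$-median/$k$-means, you would be back in the regime of Lemma~\ref{lem-bridge1} and could only conclude an $O(1)$ factor, not $1+\epsilon$. With the lemma read correctly, your composition $(1+\epsilon)^2=1+O(\epsilon)$ (and $(1+\epsilon)^2$ again for the squared-distance coefficients in $k$-BMeans) goes through exactly as in the paper.

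On the running time, your accounting matches the paper's. One caveat on your justification of $r_{max}/r_{min}\leq O(\frac{k}{\epsilon^3}\Delta)$: the claim that the distance from an input point to a sample mean is ``no smaller than the minimum pairwise distance divided by the sample size'' is not literally correct, because the vectors $p-q_i$ in $\frac{1}{m}\sum_i(p-q_i)$ can nearly cancel, making a nonzero distance much smaller than $d_{min}/m$. The paper itself only asserts the bound from the structure of the candidates in~\cite{bhattacharya2018faster} without proof, so you are not worse off than the paper here, but you should not present the division-by-$m$ argument as a proof of the lower bound on $r_{min}$.
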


 \section{Acknowledgements} 
The author was supported by a start-up fund from Michigan State University and CCF-1656905 from NSF. Part of the work was done when the author was in IIIS, Tsinghua University and Simons Institute, UC Berkeley. The author also wants to thank Jian Li, Lingxiao Huang, Yu Liu, and Shi Li for their helpful discussion.

%
%
%
%
%
%
%
%
%
%
%
%
%
%
%
%
%
%


\bibliographystyle{abbrv}


\bibliography{BKC}

\section{Appendix}
\label{sec-other}
We address two questions following Lemma~\ref{lem-4app} for $k$-BCenter: (1) is the approximation factor $4$ tight enough, and (2) why should we use $S^k$ rather than $S$ directly?

For the first question, we consider the following example. Let $n=6$ points locate on a line, $k=3$, and $L=U=2$. See Figure~\ref{fig-4app}. It is easy to know that the optimal solution is $C_1=\{p_1, p_2\}$, $C_2=\{p_3, p_4\}$, and $C_3=\{p_5, p_6\}$ with $r_{opt}=1$. Suppose that the first point selected by Gonazlez's algorithm is $p_2$, then the induced $S=\{p_2, p_5, p_1\}$ which results in a $(4-\delta)$-approximation, no matter which $3$-tuple is chosen from $S^3$. Since $\delta$ can be arbitrarily small, the approximation ratio $4$ is tight.

\begin{figure}[h]
\vspace{-.2cm} 
\centering
 \includegraphics[height=1in]{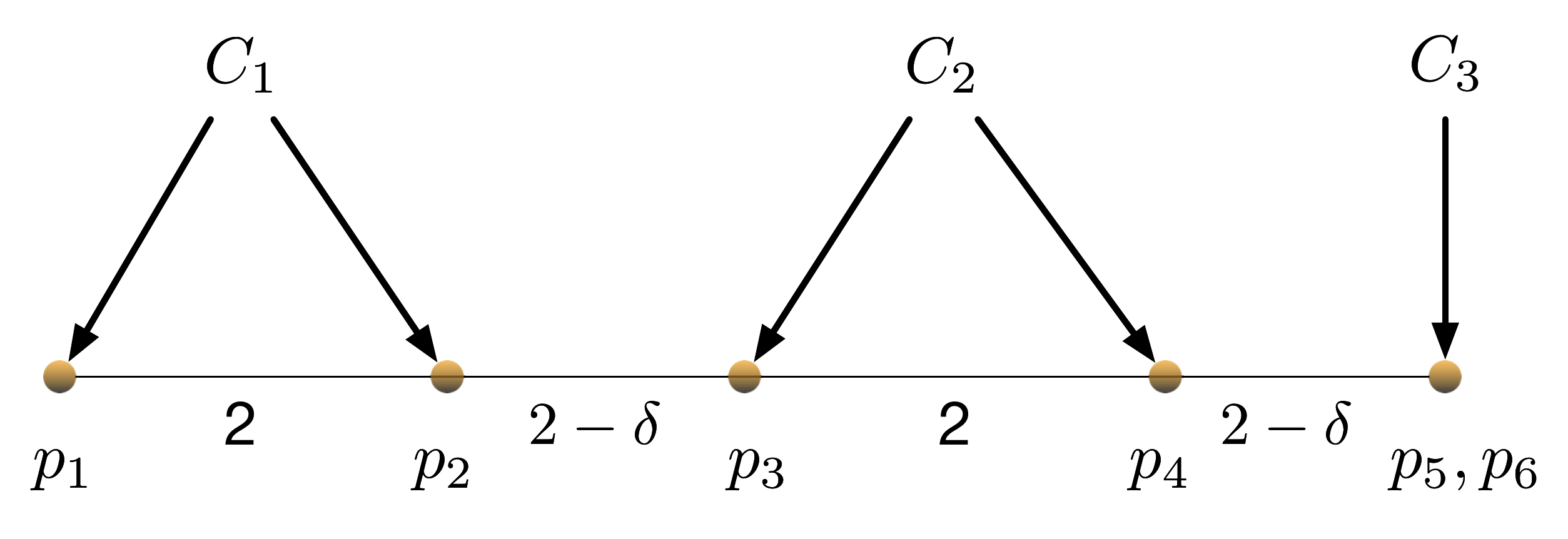}
  \vspace{-0.1in}
  \caption{$||p_1-p_2||=||p_3-p_4||=2$ and $||p_2-p_3||=||p_4-p_5||=2-\delta$ with a small positive $\delta$; $p_5$ and $p_6$ overlap.} 
   \label{fig-4app}
  \vspace{-0.15in}
\end{figure}

We construct another example to answer the second question. See Figure~\ref{fig-tuple}. It is easy to know $r_{opt}=r$. Suppose that the first point selected by Gonazlez's algorithm is $p_1$, then the induced $S=\{p_1, p_5, p_6\}$. If we take these 3 points as the cluster centers, the obtained radius is at least $h$ (since $p_3$ and $p_4$ have to be assigned to $p_6$). Consequently, the approximation ratio is $h/r$ that can be arbitrarily large. Hence we need to search the $k$-tuple points from $S^k$ rather than $S$. 

\begin{figure}[h]
\vspace{-.2cm} 
\centering
 \includegraphics[height=2in]{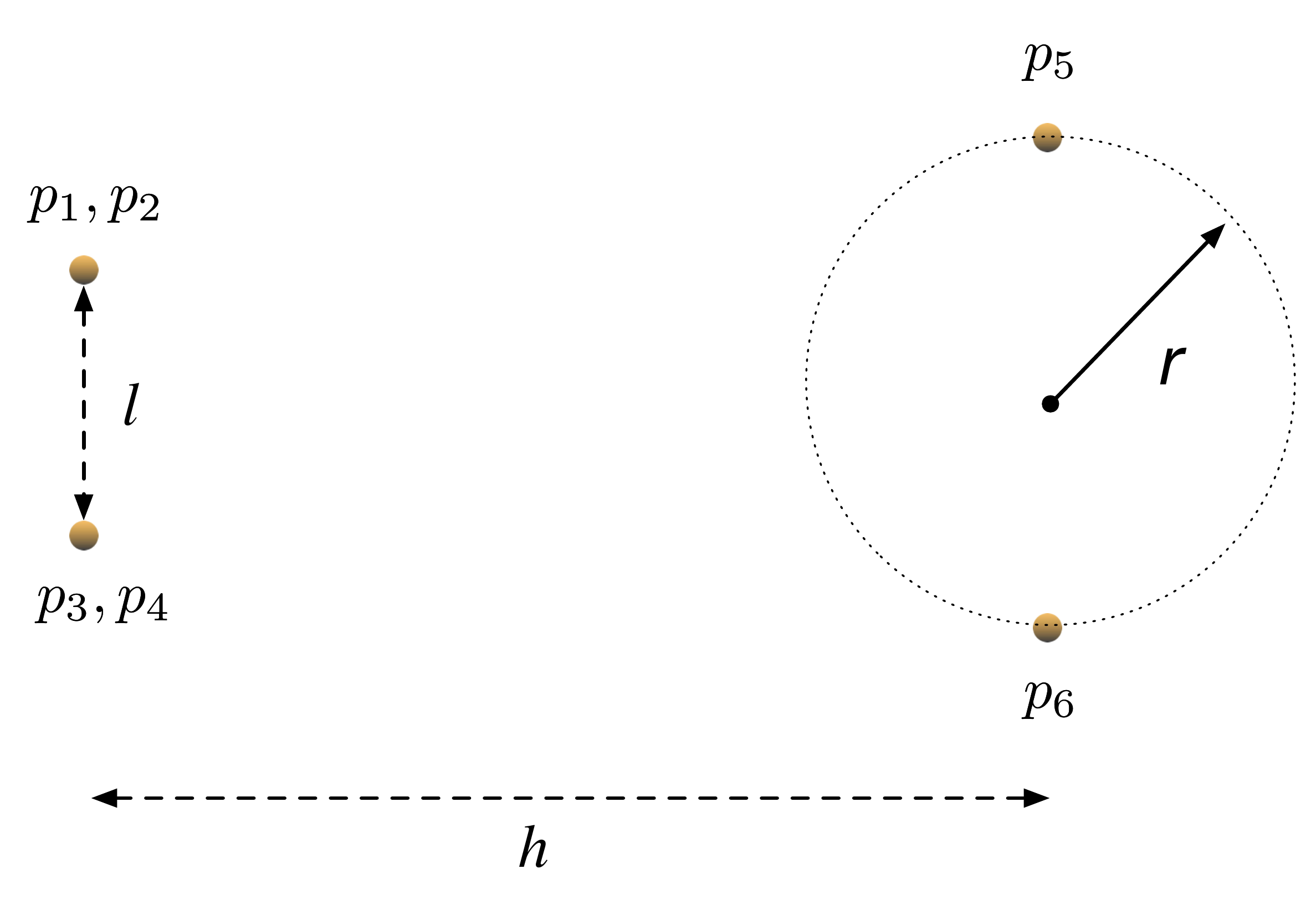}
  \vspace{-0.1in}
  \caption{Let the 6 points locate in a plane, $k=3$, and $L=U=2$. $p_1$ and $p_2$ overlap, $p_3$ and $p_4$ overlap, and these 4 points locate on the same vertical line while $p_5$ and $p_6$ locate on another vertical line; $||p_1-p_3||=l$, $||p_5-p_6||=2r$, and their horizontal distance is $h$; $l<2r\ll h$.} 
   \label{fig-tuple}
  \vspace{-0.15in}
\end{figure}


\end{document}